\documentclass{m2an}
\usepackage{amsmath, amssymb, amsfonts}
\usepackage{mathrsfs}
\usepackage{graphicx,color}
\usepackage{booktabs}
\usepackage{diagbox}
\usepackage{algorithm}
\usepackage{algpseudocode}
\usepackage{enumitem}
\usepackage{multirow}
\usepackage{subfig}
\usepackage{epstopdf}
\usepackage{comment}
\usepackage{adjustbox}
\usepackage{braket}
\usepackage{bm}
\usepackage{bbm}
\usepackage{hyperref}

\newcommand{\Tr}{\mathrm{Tr}}

\newcommand{\ud}{\,\mathrm{d}}

\newcommand{\Or}{\mathcal{O}}

\newcommand{\I}{\mathrm{i}}

\newcommand{\RR}{\mathbb{R}}
\newcommand{\CC}{\mathbb{C}}
\newcommand{\ZZ}{\mathbb{Z}}

\aboverulesep=0ex
\belowrulesep=0ex

\global\long\def\Tr{\mathrm{Tr}}

\numberwithin{equation}{section}
\numberwithin{figure}{section}
\newtheorem{thm}{\protect\theoremname}

\newtheorem{prop}[thm]{\protect\propositionname}

\providecommand{\corollaryname}{Corollary}
\providecommand{\lemmaname}{Lemma}
\providecommand{\propositionname}{Proposition}

\providecommand{\theoremname}{Theorem}

\newcommand{\CS}{\mathcal{C}}

\begin{document}

\title{Numerical solution of large scale Hartree-Fock-Bogoliubov equations}
\author{Lin Lin}\address{Department of Mathematics, University of California, Berkeley, and Computational Research Division, Lawrence Berkeley National Laboratory, Berkeley, CA 94720. Email: \texttt{linlin@math.berkeley.edu}}
\author{Xiaojie Wu}\address{Department of Mathematics, University of California,  Berkeley, CA 94720. Email: \texttt{xiaojiewu@berkeley.edu}}

\begin{abstract}
The Hartree-Fock-Bogoliubov (HFB) theory is the starting point for treating superconducting systems. However, the computational cost for solving large scale HFB equations can be much larger than that of the Hartree-Fock equations, particularly when the Hamiltonian matrix is sparse, and the number of electrons $N$ is relatively small compared to the matrix size $N_{b}$. We first provide a concise and relatively self-contained review of the HFB theory for general finite sized quantum systems, with special focus on the treatment of spin symmetries from a linear algebra perspective. We then demonstrate that the pole expansion and selected inversion (PEXSI) method can be particularly well suited for solving large scale HFB equations. For a Hubbard-type Hamiltonian, the cost of PEXSI is at most $\Or(N_b^2)$ for both gapped and gapless systems, which can be significantly faster than the standard cubic scaling diagonalization methods. We show that PEXSI can solve a two-dimensional Hubbard-Hofstadter model with $N_b$ up to $2.88\times 10^6$, and the wall clock time is less than $100$ s using $17280$ CPU cores. This enables the simulation of physical systems under experimentally realizable magnetic fields, which cannot be otherwise simulated with smaller systems. 
\end{abstract}

\keywords{Hartree-Fock-Bogoliubov equation, PEXSI, Hubbard-Hofstadter model, superconductivity}
\maketitle

\section{Introduction}

The Hartree-Fock (HF) theory plays a fundamental role in quantum physics and chemistry. Similarly, the  Hartree-Fock-Bogoliubov (HFB) theory is the simplest first principle method for treating superconducting systems. The HFB theory generalizes the celebrated Bardeen-Cooper-Schrieffer (BCS) theory~\cite{bcs0,bcs1}, which successfully explained  superconducting phenomena when the phase transition temperature (denoted by $T_c$) is low. Even for high $T_c$ superconductors (HTC) where the HFB theory is not effective by itself, it is still the starting point and a key component in more advanced theories, such as those based on quantum Monte Carlo methods~\cite{chiesa2013,Rosenberg2015,shiwei2017} and quantum embedding theories~\cite{Zheng2016,Zheng2017}. 
A distinctive feature of  HFB  is the existence of the \textit{pairing effect}. When pairing occurs,  the number of particles is no longer a good quantum number, and a fluctuating number of particles is the premise of the superconducting phenomena.   

Consider a quantum many-body system discretized using $N_b$ basis functions (called spin orbitals in quantum physics literature) with $N$ electrons (for HFB, this is defined in the sense of an ensemble average).  Both the HF and HFB equations are nonlinear equations, which can be solved iteratively via the self-consistent field (SCF) iteration. However, the computational cost for solving the HFB equations can be much higher, due to the following two reasons. 1) HF calculations require computing the lowest $N$ eigenfunctions of a Hamiltonian matrix of size $N_b$, while HFB requires computing the lowest $N_b$ eigenfunctions of a quasi-Hamiltonian matrix of size $2N_b$, i.e. $50\%$ of the eigenpairs.  This essentially forbids the efficient usage of iterative eigensolvers, even if $N_b \gg N$ (such as the case of a large basis set or low-doping) or if the HFB Hamiltonian is sparse. 2) The constraint of the number of electrons in the HF theory can be trivially satisfied by taking the lowest $N$ eigenfunctions. Such a constraint in the HFB theory can only be satisfied by iteratively adjusting the chemical potential, which can increase the number of SCF iterations. Therefore in practice it can be difficult to perform HFB calculations for large scale systems, such as the treatment of superconductors under an experimentally realizable magnetic field~\cite{zhu2016bogoliubov}, large quantum dots~\cite{goldhaber1998kondo, cronenwett1998tunable}, nano-transport phenomena~\cite{Kosov2009}, to name a few.

\noindent\textbf{Contribution:}

The contribution of this paper is two-fold. First, we attempt to provide
a relatively concise and self contained derivation of the HFB theory.
While the HFB theory itself is certainly well-known in the physics literature, our experience indicated that the derivation of its general form for finite sized quantum systems, and the associated linear algebra structures, are not commonly presented in detail. This is partly because many  textbooks   in physics discuss the HFB and BCS theories together, and often focus on certain special cases such as the spin-singlet coupling case or translational invariant
systems. While these settings indeed occur most frequently in practice, if one starts from such settings, some amount of reverse engineering could be needed to grasp the overall picture.  Our perspective largely follows that of the excellent textbook by Blaizot and Ripka~\cite{Blaizot1986}\footnote{Unfortunately, we found that this book seems to be out of production now.}, which focuses on finite sized systems and therefore can be described using finite dimensional matrices. We compare the difference between the numerical solution of HF equations and HFB equations, especially in the case when $N_b$ is large. We also introduce the concept of spin-singlet and spin-triplet couplings from a linear algebra perspective, which reduces the dimension of the Hamiltonian. We hope our presentation would be useful for interested readers not familiar with the matter. 


Second, we propose to accelerate large scale HFB calculations using the
pole expansion and selected inversion (PEXSI) method~\cite{LinLuYingE2009,LinGarciaHuhsEtAl2014}, which is a Fermi operator expansion (FOE) method for solving electronic structure problems \cite{Goedecker1999}. 
While the computational cost for diagonalizing the HFB Hamiltonian scales as $O(N_b^3),$  PEXSI can evaluate the generalized density matrix with cost that scales at most as $\Or(N_b^2)$ for certain sparse Hamiltonians (such as Hubbard-type Hamiltonians). As mentioned before, diagonalization methods for HF calculations only need to evaluate the lowest $N$ eigenpairs, and can immediately identify the chemical potential once the eigenvalues are available. On the other hand, the cost of PEXSI only depends on $N_b$ and is independent of $N$, and the chemical potential can only be determined iteratively. Therefore for HF calculations, PEXSI only becomes faster than diagonalization methods when the system size becomes relatively large. The advantage of diagonalization methods no longer holds for HFB calculations, and therefore PEXSI can become advantageous at  rather small system sizes. Furthermore, the PEXSI method is ideally suited for parallel computing, and can be scaled to $10^4\sim 10^5$ processors.   

Using a two-dimensional Hubbard-Hofstadter model for
example, we demonstrate that the PEXSI method can already be more
efficient than diagonalization methods for small systems of less than $400$ sites. 
Thanks to the reduced complexity, we perform large scale HFB
calculations for systems of $1.44\times 10^6$ sites, and the wall clock time of each calculation of grand canonical ensemble is less than $100$ seconds using $17280$ CPU cores. A
diagonalization method would require using a dense eigensolver for a
complex matrix of size $2.88\times 10^6$, which is prohibitively expensive.
This allows us to approach the experimentally realizable range of magnetic fields of around $20$ Tesla, using a lattice of  around $2\times 10^6$  sites (and the matrix size is $4\times 10^6$). We also demonstrate the
usage of the PEXSI based HFB method for studying phase diagrams and striped order of the pairing potential for large systems.

\noindent\textbf{Related works:}
The general mathematical formulation of of the HFB theory was studied comprehensively by Bach, Lieb and Solovej~\cite{Bach1994}, which leads to many subsequent works \cite{kuzemsky2015, braunlich2014, Hainzl2005, Hainzl2007, lenzmann2010}. The first study HFB from the perspective of numerical analysis only appeared a few years
ago by Lewin and Paul~\cite{LewinPaul}, which focused on the self-consistent field iterations in
HFB calculations. Besides diagonalization methods, Fermi operator expansion methods (based on the Chebyshev expansion) have also been used to accelerate the solution of HFB equations \cite{Nagai2012,covaci2010,zha2010}. 

\noindent\textbf{Organization:}
The rest of the paper is organized as follows. We first introduce some background information, including the HF theory and a corollary of Wick's theorem in Section~\ref{sec:hf}. This allows us to introduce the
HFB theory in Section~\ref{sec:hfb}, in a way that is parallel to the discussion of the HF theory. The numerical solution of HFB equations using diagonalization methods, and the symmetry considerations are discussed in Section \ref{sec:solvehfb}. We then
introduce the PEXSI method for HFB calculations in
Section~\ref{sec:pexsi}. We demonstrate the numerical
performance of PEXSI in Section~\ref{sec:numer}. The conclusion and discussion are given in
Section~\ref{sec:conclusion}. 

\section{Preliminaries}\label{sec:hf}

%
%
%
%
%
%
%

Throughout the paper, we use the Dirac bra-ket notation
for quantum states. For a matrix $A\in\CC^{m\times n}$, its transpose,
complex conjugate (entry-wise), and Hermitian conjugate are
denoted by $A^{\top},\overline{A},A^{\dagger}$, respectively.  Unless otherwise mentioned, a vector is always viewed as a column vector.

In the second-quantized formulation, the state space is called the Fock space, denoted by $\mathcal{F}$. The Fock space is the direct sum of tensor products of multiple replicas of single-particle Hilbert space $\mathcal H$. Given a basis $\{\ket{\psi_i}\}_{i=1,\ldots,N_b}$ of $\mathcal H$, the occupancy number basis set for the Fock space is 
\[
  \{\ket{s_1,\ldots,s_{N_b}}=\ket{\psi_1}^{s_1}\ldots\ket{\psi_{N_b}}^{s_{N_b}}\}, \quad s_i \in \{0,1\},i=1,\ldots,N_b,
\]
which is an orthonormal basis set satisfying
\begin{equation}
\braket{s_{i_1},\ldots,s_{i_{N_b}} \vert s_{j_1},\ldots,s_{j_{N_b}}} =
\delta_{i_1 j_1}\cdots \delta_{i_{N_b} j_{N_b}}.
\end{equation}
Here $N_b$ is the number of basis functions, or the number of sites in the single-particle space $\mathcal H$. Hence, the dimension of Fock space is $2^{N_b}$.

A state $\ket{\Psi}\in\mathcal{F}$ will be written as a
linear combination of occupancy number basis elements as follows: 
\begin{equation}
  \ket{\Psi} = \sum_{s_1,\ldots,s_{N_b}\in\{0,1\}}
  \Psi(s_1,\ldots,s_{N_b}) \ket{s_1,\ldots,s_{N_b}}, \quad
  \Psi(s_1,\ldots,s_{N_b})\in\CC.
  \label{}
\end{equation}
Hence the state vector $\ket{\Psi}$ can be identified with a vector
$\Psi\in\CC^{2^{{N_b}}}$, and $\mathcal{F}\cong\CC^{2^{{N_b}}}$.
Without loss of generality we always assume $\ket{\Psi}$ is normalized,
i.e.
\begin{equation}
  \braket{\Psi|\Psi} = \sum_{s_1,\ldots,s_{N_b}\in\{0,1\}}
  |\Psi(s_1,\ldots,s_{N_b})|^2 = 1.
  \label{eqn:psi_normalize}
\end{equation}

The fermionic creation and annihilation operators, which add and remove one particle in the quantum state, are respectively defined as $(p= 1,\ldots,N_b)$
\begin{equation}
  \begin{split}
    \hat{a}_p^\dagger \ket{s_1,\ldots,s_{N_b}} &= (-1)^{\sum_{q=1}^{p-1}
      s_{q}}(1-s_p)\ket{s_1,\ldots,1-s_{p},\ldots,s_{N_b}},\\
    \hat{a}_p \ket{s_1,\ldots,s_{N_b}} &= (-1)^{\sum_{q=1}^{p-1} s_{q}}s_p\ket{s_1,\ldots,1-s_{p},\ldots,s_{N_b}}.
  \end{split}
\end{equation}
They satisfy the canonical anti-communication relation (CAR)
\begin{equation}
  \{\hat{a}_p^{\dagger},\hat{a}_q\}=\delta_{pq},\quad \{\hat{a}_p^{\dagger},\hat{a}_q^{\dagger}\}=0,\quad 
  \{\hat{a}_p,\hat{a}_q\}=0, \quad p,q=1,\ldots,N_b.
  \label{eqn:anti_commutation}
\end{equation}
The number operator defined as $\hat{n}_{p}:=\hat{a}^{\dagger}_{p} \hat{a}_{p}$ satisfies
\begin{equation}
  \hat n_p \ket{s_1,\ldots,s_{N_b}} = s_p \ket{s_1,\ldots,s_{N_b}},\quad p= 1,\ldots,N_b.
\end{equation}
The eigenvalues of number operators are either $0$ or $1$. We also define $\hat{N}=\sum_{p=1}^{N_b} \hat{n}_p$ to be the total
number operator.

The state $\ket{0}:=\ket{0,\ldots,0}$ is called the vacuum state. 
In particular, an annihilation operator acting on the vacuum state always
vanishes, i.e. 
\begin{equation}
  \hat{a}_{p}\ket{0}=0, \quad p=1,\ldots,N_b.
  \label{eqn:vacuum}
\end{equation}
In fact, Eq.~\eqref{eqn:vacuum} can also be viewed as the \textit{defining equation}
for the vacuum corresponding to a set of creation and annihilation operators satisfying the CAR. 
For a given state $\ket{\Psi}\in\mathcal{F}$ and a
self-adjoint operator $\hat{O}$, we
define $\braket{\hat{O}}:=\braket{\Psi|\hat{O}|\Psi}$
to be the expectation value of $\hat{O}$.

In this paper, we assume that the quantum many-body Hamiltonian takes the form
\begin{equation}\label{eqn:manybodyH}
\hat{H}=\sum_{p,q=1}^{N_b} h_{pq}^{0} \hat{a}_{p}^{\dagger} \hat{a}_{q}+\frac{1}{4} \sum_{p,q,r,s=1}^{N_b} V_{p q r s} \hat{a}_{p}^{\dagger} \hat{a}_{q}^{\dagger} \hat{a}_{s} \hat{a}_{r}
\end{equation}
Here $h^{0}\in \CC^{N_b\times N_b}$ is a Hermitian matrix. 
The superscript $0$ means that $h^0$ only comes from the single-particle contribution. 
$V$ is a 4-tensor and characterizes the two-particle interaction. Due to the CAR, without loss of generality we may require $V$ to satisfy the following symmetry properties
\begin{equation}\label{eqn:Vsymmmetry}
  V_{pqrs}=-V_{pqsr}=-V_{qprs}=V_{qpsr}.
\end{equation}
In other words, $V\in \CC^{N_b\times N_b\times N_b\times N_b}$ is
an anti-symmetric $4$-tensor with respect to the permutation of indices $p,q$ or
$r,s$.  Furthermore, $V_{pqrs}=\overline{V}_{rspq}$.
More general Hamiltonians such as those containing a cubic term with respect to the creation and annihilation operators can be treated similarly. Note that the Hamiltonian should be viewed as a 
discretized model, obtained by discretizing a quantum many-body Hamiltonian in the continuous space using $N_b$ spin orbitals. In quantum chemistry, the symmetry requirement of $V$ in
Eq.~\eqref{eqn:Vsymmmetry} implies that we use the anti-symmetrized form
of the two-electron integral~\cite{SzaboOstlund1989}. For simplicity we omit the Hamiltonian in the continuous space and the detailed form of the spin orbitals, and directly focus on the discretized model. 
If an operator $\hat{O}$ commutes with the total number
operator, i.e. $[\hat{O},\hat{N}]=0$, then $\hat{O}$ is called a
particle number preserving operator. It is clear that the many-body 
Hamiltonian $\hat{H}$ is a particle number preserving operator.
In the discussion below, when the context is clear we may drop the range of the summation.


Our goal is to compute the ground-state energy of~\eqref{eqn:manybodyH}, which
can be obtained by variational principle as
\begin{equation}
  E_{0} = \inf_{\ket{\Psi} \in \mathcal{F} \,:\, \braket{\Psi\vert\Psi}  = 1}
  \bra{\Psi} \hat{H}-\mu \hat{N} \ket{\Psi}.
  \label{eqn:groundstate}
\end{equation}
The chemical potential $\mu$ is a Lagrange
multiplier chosen so that the ground state wavefunction $\vert\Psi\rangle$ has a number of electrons equal to a pre-specified integer $N\in \{0,1,\ldots, N_b\}$, such that
\begin{equation}
  \braket{\Psi\vert \hat{N} \vert \Psi}=N.
  \label{eqn:integer_condition}
\end{equation}

In the  Hartree-Fock theory, the wavefunction is not minimized with respect to the entire Fock space, but 
is assumed to be of the form
\begin{equation}\label{eqn:slater_hf}
\ket{\Psi}=\hat{c}_1^{\dagger}\cdots \hat{c}_N^{\dagger} \ket{0}.
\end{equation}
Here we defined a new set of creation operators
\begin{equation}
\hat{c}_i^{\dagger} = \sum_{p=1}^{N_b} \hat{a}_p^{\dagger} \Phi_{pi},
\quad i=1,\ldots,N.
\end{equation}
The columns of the matrix $\Phi\in\CC^{N_b\times N}$ are a set of orthonormal vectors, and the new set creation operators
satisfy the CAR
\[
\{\hat{c}_i^{\dagger},\hat{c}_j\}=\delta_{ij},\quad \{\hat{c}_i^{\dagger},\hat{c}_j^{\dagger}\}=0,\quad 
\{\hat{c}_i,\hat{c}_j\}=0.
\]
By extending $\Phi$ to be a unitary matrix $\widetilde{\Phi}$ of size $N_{b}\times N_{b}$, 
we may define $N_{b}$ creation
operators $\{\hat{c}_i^{\dagger}\}_{i=1}^{N_{b}}$, and correspondingly
the annihilation operators $\{\hat{c}_i\}_{i=1}^{N_{b}}$ that satisfy the CAR. 
Let us introduce the short hand notation
\[
  \hat{a}^{\dagger} = \left( \hat{a}_{1}^{\dagger},\ldots,
  \hat{a}_{N_{b}}^{\dagger}\right), \quad \hat{a} = \left( \hat{a}_{1},\ldots,
  \hat{a}_{N_{b}} \right)^{\top},
\]
and correspondingly
\[
  \hat{c}^{\dagger} = \left( \hat{c}_{1}^{\dagger},\ldots,
  \hat{c}_{N_{b}}^{\dagger}\right), \quad \hat{c} = \left( \hat{c}_{1},\ldots,
  \hat{c}_{N_{b}} \right)^{\top}.
\]
They satisfy
\[
\hat{c}^{\dagger} = \hat{a}^{\dagger} \widetilde{\Phi}, \quad \hat{c} =
\widetilde{\Phi}^{\dagger} \hat{a}.
\]

A wavefunction of the form \eqref{eqn:slater_hf} is called a
\textit{Slater determinant}. The term ``determinant'' comes from that the
wavefunction~\eqref{eqn:slater_hf} in the first quantized representation 
can be written as a determinant~\cite{SzaboOstlund1989}.
The set of all Slater determinants is denoted by $\mathcal{S}^{\text{HF}}$. 
It can be directly verified that any Slater determinant satisfies the constraint for the number of electrons~\eqref{eqn:integer_condition}. Hence the chemical potential only contributes a
constant term $\mu N$ to the energy and can be neglected.
Then the Hartree-Fock theory can be concisely formulated as
\begin{equation}
  E^{\text{HF}}_{0} = \inf_{\ket{\Psi} \in \mathcal{S}^{\text{HF}}} \braket{\Psi | \hat{H}  | \Psi}.
  \label{eqn:groundstate_HF}
\end{equation}
Note that each state $\ket{\Psi}\in\mathcal{S}^{\text{HF}}$ is
automatically normalized due to the orthonormal constraint of the matrix
$\Phi$.  

Using the CAR (more specifically $\hat{c}_i^{\dagger}\hat{c}_i^{\dagger}=0$), 
we have for any $\ket{\Psi}\in \mathcal{S}^{\text{HF}}$ parameterized by Eq.~\eqref{eqn:slater_hf}
\[
\hat{c}_i^{\dagger}\ket{\Psi}=0, \quad i=1,\ldots,N.
\]
In quantum physics,  the creation and
annihilation operators are often renamed as follows
\begin{equation}
  \begin{split}
    \hat{b}_i=\hat{c}_i^{\dagger}, &\quad \hat{b}^{\dagger}_i=\hat{c}_i,
  \quad  i=1,\ldots,N,\\
  \hat{b}_i=\hat{c}_i, &\quad \hat{b}_i^{\dagger}=\hat{c}_i^{\dagger}, \quad 
  i=N+1,\ldots,N_{b}.\\
  \end{split}
  \label{eqn:particle_hole_hf}
\end{equation}
We can readily verify that $\{\hat{b}_{i}^{\dagger},\hat{b}_{i}\}$ satisfy the CAR. 
Eq.~\eqref{eqn:particle_hole_hf} is often referred
to as the particle-hole transformation, where the states
$i=1,\ldots,N$ defining the Slater determinant are called ``holes'', and
the rest of the states $i=N+1,\ldots,N_{b}$ are called
``particles''~\cite{Blaizot1986}. Using the particle-hole transformation,
the Slater determinant satisfies
\begin{equation}\label{eqn:quasivacuum_hf}
  \hat{b}_i\ket{\Psi}=0, \quad i=1,\ldots,N_{b}.
\end{equation}
Recall that the property~\eqref{eqn:vacuum} for the vacuum state
$\ket{0}$, we find that a state $\ket{\Psi}\in\mathcal{S}^{\text{HF}}$
can be viewed as a vacuum operator defined by the annihilation operators
$\{\hat{b}_i\}_{i=1}^{N_{b}}$. In this sense, $\ket{\Psi}$ is called a
\textit{quasi-particle vacuum} state~\cite{Blaizot1986}.

The advantage of viewing $\ket{\Psi}$ as a quasi-particle vacuum state
is that one may concisely derive the Hartree-Fock equations using Wick's
theorem (see e.g.~\cite{Blaizot1986,NegeleOrland1988,SzaboOstlund1989}).  We omit the general formulation of
Wick's theorem here, and only give the particular instance of Wick's theorem useful for the purpose of this paper below.

\begin{prop}[Wick's theorem]\label{prop:wick}
If $\ket{\Psi}$ is a quasi-particle vacuum state with respect to a set of creation and annihilation operators satisfying the CAR, then the expectation value $\braket{ A_1 A_2 A_3 A_4} := \braket{\Psi | A_1 A_2 A_3 A_4 | \Psi}$ with respect to four operators $A_1,A_2,A_3,A_4$ can be evaluated as
\begin{equation}\label{eqn:wick}
\braket{ A_1 A_2 A_3 A_4} = \braket{ A_1 A_2 } \braket{ A_3 A_4 } - \braket{ A_1 A_3 } \braket{ A_2 A_4 } + \braket{ A_1 A_4 } \braket{ A_2 A_3 }.   
\end{equation}
\end{prop}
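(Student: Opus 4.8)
The plan is to exploit the defining property of the quasi-particle vacuum, $\hat{b}_i\ket{\Psi}=0$ for all $i$. As in every application of the proposition in this paper, each $A_k$ is a linear combination of creation and annihilation operators, hence of the quasi-particle operators $\hat{b}_i,\hat{b}_i^{\dagger}$; write $A_k=A_k^{(+)}+A_k^{(-)}$, where $A_k^{(-)}$ collects the terms proportional to the $\hat{b}_i$ and $A_k^{(+)}$ those proportional to the $\hat{b}_i^{\dagger}$, so that $A_k^{(-)}\ket{\Psi}=0$ and $\bra{\Psi}A_k^{(+)}=0$. The first step is to establish three facts: (i) every anticommutator of two of the operators $A_i^{(\pm)}$, $A_j^{(\pm)}$ is a scalar multiple of the identity, and in particular $\{A_i^{(-)},A_j^{(-)}\}=0$; (ii) hence $\{A_i^{(-)},A_j\}=\{A_i^{(-)},A_j^{(+)}\}\in\CC$; and (iii) the two-point function satisfies $\braket{A_i A_j}=\{A_i^{(-)},A_j\}$, which follows by expanding $\braket{A_i A_j}=\bra{\Psi}(A_i^{(+)}+A_i^{(-)})(A_j^{(+)}+A_j^{(-)})\ket{\Psi}$, discarding the three terms killed by $\bra{\Psi}A_i^{(+)}=0$ or $A_j^{(-)}\ket{\Psi}=0$, and rewriting the surviving term via (i).

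Given these facts I would carry out a short telescoping computation. Using $\bra{\Psi}A_1^{(+)}=0$, replace $A_1$ by $A_1^{(-)}$, and then move $A_1^{(-)}$ to the right past $A_2$, then $A_3$, then $A_4$, each time using $A_1^{(-)}A_j=\{A_1^{(-)},A_j\}-A_jA_1^{(-)}$ together with the fact that $\{A_1^{(-)},A_j\}=\braket{A_1 A_j}$ is a scalar that factors out of the expectation value:
\begin{align*}
\braket{A_1 A_2 A_3 A_4}
&=\braket{A_1^{(-)}A_2 A_3 A_4}\\
&=\braket{A_1 A_2}\braket{A_3 A_4}-\braket{A_2 A_1^{(-)}A_3 A_4}\\
&=\braket{A_1 A_2}\braket{A_3 A_4}-\braket{A_1 A_3}\braket{A_2 A_4}+\braket{A_2 A_3 A_1^{(-)}A_4}\\
&=\braket{A_1 A_2}\braket{A_3 A_4}-\braket{A_1 A_3}\braket{A_2 A_4}+\braket{A_1 A_4}\braket{A_2 A_3}-\braket{A_2 A_3 A_4 A_1^{(-)}}.
\end{align*}
The final expectation value vanishes because $A_1^{(-)}\ket{\Psi}=0$, and what remains is exactly \eqref{eqn:wick}.

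The only delicate point is the fermionic sign bookkeeping: each time $A_1^{(-)}$ is transposed past a neighbouring operator one minus sign appears, and these are precisely the alternating signs in \eqref{eqn:wick}. One must also check at step (i) that the anticommutators involved really are scalars; this is the single place where the hypothesis that $\ket{\Psi}$ is a quasi-particle vacuum (equivalently, that the $\hat{b}_i,\hat{b}_i^{\dagger}$ obey the CAR) is used, and it is what allows the contractions $\braket{A_i A_j}$ to be pulled through the product. No genuine combinatorics is needed for four operators; the same argument applied inductively yields Wick's theorem for an arbitrary even number of operators, which we will not need here.
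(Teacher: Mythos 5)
Your argument is correct, and it is the standard textbook proof of the quadratic Wick theorem. Note, however, that the paper itself does not prove this proposition: it is stated without proof, with the general formulation explicitly omitted and the reader referred to the cited references (Blaizot--Ripka, Negele--Orland, Szabo--Ostlund). So there is no ``paper proof'' to compare against; your contribution is to supply the missing argument. The decomposition $A_k=A_k^{(+)}+A_k^{(-)}$ into quasi-particle creation and annihilation parts, the observation that all anticommutators are c-numbers (whence $\braket{A_iA_j}=\{A_i^{(-)},A_j\}$), and the telescoping commutation of $A_1^{(-)}$ to the right with alternating signs are all sound, and the sign bookkeeping matches \eqref{eqn:wick}. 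The one point worth making explicit is that the proposition as stated is only true when each $A_k$ is \emph{linear} in the creation and annihilation operators; you note this as an assumption drawn from how the proposition is used in the paper, and indeed every application (to $\hat{a}_p^{\dagger}\hat{a}_q^{\dagger}\hat{a}_s\hat{a}_r$) satisfies it, but a self-contained statement of the proposition should include that hypothesis, since it is exactly what guarantees step (i).
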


For any $\ket{\Psi}\in\mathcal{F}$, we may define the \textit{density matrix} as 
\begin{equation}
  \rho_{pq}:=\braket{\hat{a}_{q}^{\dagger} \hat{a}_{p}}, \quad
  p,q=1,\ldots,N_b.
  \label{eqn:DM}
\end{equation}
Then if $\ket{\Psi}\in\mathcal{S}^{\text{HF}}$, by Wick's theorem 
\begin{equation}
  \begin{split}
    \braket{\hat{a}_{p}^{\dagger} \hat{a}_{q}^{\dagger} \hat{a}_{s}
    \hat{a}_{r}} 
    =& \braket{\hat{a}_{p}^{\dagger}\hat{a}_{r}}\braket{\hat{a}_{q}^{\dagger}\hat{a}_{s}}
    -
    \braket{\hat{a}_{p}^{\dagger}\hat{a}_{s}}\braket{\hat{a}_{q}^{\dagger}\hat{a}_{r}}
    + \braket{\hat{a}_{p}^{\dagger}\hat{a}^{\dagger}_{q}}\braket{\hat{a}_{s}\hat{a}_{r}}\\
    =& \rho_{rp}\rho_{sq} - \rho_{sp}\rho_{rq}.
  \end{split}
  \label{}
\end{equation}
Note that $\ket{\Psi}$ is an eigenstate of $\hat{N}$, and hence 
the anomalous term $\braket{\hat{a}_{p}^{\dagger}\hat{a}^{\dagger}_{q}}=\braket{\hat{a}_{s}\hat{a}_{r}}=0$.

Furthermore, through direct computation we can verify
\[
  \rho_{pq} = \sum_{i=1}^{N} \Phi_{pi} \overline{\Phi}_{qi}, \quad
  p,q=1,\ldots, N_{b},
\]
or in matrix form
\[
  \rho = \Phi\Phi^{\dagger}.
\]
Therefore the density matrix for a Slater determinant is a Hermitian,
idempotent matrix and its trace is equal to $N$. In fact this defines
the set of admissible density matrices as
\begin{equation}
  \mathcal{D}^{\text{HF}} = \left\{ \rho\in \CC^{N_{b}\times N_{b}} \vert 
  \rho^{\dagger} = \rho, \quad \rho^2=\rho,\quad \Tr[\rho]=N \right\}.
  \label{}
\end{equation}


The total energy functional with respect to $\ket{\Psi}\in\mathcal{S}^{\text{HF}}$ can be written as a functional of
$\rho$ as 
\begin{equation}
  \begin{split}
    \mathcal{E}^{\text{HF}} [ \rho ] := & \braket{\hat{H}} \\ 
    = & \sum_{pq} h^{0}_{pq} \rho_{qp} +
    \frac14 \sum_{pqrs} V_{pqrs}(\rho_{rp}\rho_{sq} -
    \rho_{sp}\rho_{rq})\\
    =& \sum_{pq} h^{0}_{pq} \rho_{qp} +
    \frac12 \sum_{pqrs} V_{pqrs}\rho_{rp}\rho_{sq}.
  \end{split}
  \label{eqn:HF_Efunctional}
\end{equation}
In the last equality, we have used the anti-symmetry condition
$V_{pqrs}=-V_{pqsr}$.  

To solve the minimization problem, we define
\begin{equation}
  h_{pq}[\rho] := \frac{ \partial \mathcal{E}^{\text{HF}} }{ \partial \rho_{qp} } = 
  h^{0}_{pq} + \sum_{rs} V_{prqs} \rho_{sr}.
  \label{eqn:Hamiltonian_HF}
\end{equation}

Let $\rho$ be the minimizer of Eq.~\eqref{eqn:groundstate} with the energy functional defined in Eq. \eqref{eqn:HF_Efunctional}, and assume that 
there is a positive energy gap between the $N$-th and $(N+1)$-th eigenvalues of $h[\rho]$.  
Then the corresponding Euler-Lagrange equation  gives the following nonlinear eigenvalue problem
\begin{equation}
    h[\rho] \Phi = \Phi \Lambda,  \quad \rho=\Phi\Phi^{\dagger}.
    \label{eqn:HF_scf}
\end{equation}
Here all columns of $\Phi\in\CC^{N_b\times N}$ form an orthonormal set of $N$ vectors,
and $\Lambda=\text{diag}(\varepsilon_{1},\ldots,\varepsilon_{N})\in\RR^{N\times
N}$ is a diagonal matrix, i.e. $(\Lambda,\Phi)$ are the eigenpairs
corresponding to the lowest $N$ eigenvalues of $h[\rho]$. 
Eq.~\eqref{eqn:HF_scf} is called the Hartree-Fock equation, which needs
to be solved self-consistently until convergence. The choice of taking
the lowest $N$ eigenpairs is called the \textit{aufbau principle} in
quantum physics literature, which can be rigorously proved for certain
choice of two-particle interaction $V$~\cite{Bach1994}.

\section{Hartree-Fock-Bogoliubov theory}\label{sec:hfb}

In the Hartree-Fock theory, a state $\ket{\Psi}\in\mathcal{S}^{\text{HF}}$ has a
well defined number of electrons $N$, and the Hartree-Fock
Hamiltonian is a particle number preserving
operator. The Hartree-Fock-Bogoliubov (HFB) theory relaxes such a
constraint and allows the particle number to fluctuate. We first 
group the annihilation operators as a column vector as
\begin{equation}
  \hat{\alpha} = \left( \begin{array} { l } { \hat{a} } \\ { \hat{a}^{ \dagger } }
  \end{array} \right), 
  \label{eqn:alpha_group}
\end{equation}
i.e. $\hat{\alpha}_{p} = \hat{a}_{p} , \quad \hat{\alpha}_{N_b+p} =
\hat{a}_{p}^{\dagger} , \quad p = 1 , \ldots , N_b$, 
and the Hermitian conjugate of $\hat{\alpha}$ as $\hat{\alpha}^{ \dagger } =
\left( \hat{a}^{ \dagger }, \hat{a} \right)$. Note that $\{\hat{\alpha}^{\dagger},\hat{\alpha}\}$ are only
formal notations. In
particular, they \textit{do not} satisfy the CAR (just note that for any
$1\le p\le N_{b}$, $\{\hat{\alpha}_{p},\hat{\alpha}_{p+N_{b}}\} =
\{\hat{a}_{p},\hat{a}_{p}^{\dagger}\} = 1 \ne 0$).

In parallel to the Hartree-Fock theory, the ground state
wavefunction $\ket{\Psi}$ in the HFB theory is assumed to be a \textit{generalized}
Slater determinant
\begin{equation}\label{eqn:slater_hfb}
  \ket{\Psi}=\hat{c}_1^{\dagger}\cdots \hat{c}_{N_{b}}^{\dagger} \ket{0}.
\end{equation}
Note that there are $N_{b}$ instead of $N$ creation operators acting on
the vacuum state. In fact, we
shall define $N_{b}$ creation operators
\begin{equation}\label{eqn:creation_hfb}
\hat{c}_i^{\dagger} = \sum_{p=1}^{2N_b} \hat{\alpha}_p^{\dagger} \Phi_{pi},
\quad i=1,\ldots,N_{b}.
\end{equation}
We shall demonstrate below that by a proper choice of $\Phi\in
\CC^{2N_{b}\times N_{b}}$, the new
creation and annihilation operators
$\{\hat{c}^{\dagger}_{i},\hat{c}_{i}\}_{i=1}^{N_{b}}$ indeed 
satisfy the CAR.  If so, we may perform a new (and trivial)
particle-hole transformation as
\begin{equation}
    \hat{b}_i=\hat{c}_i^{\dagger}, \quad \hat{b}^{\dagger}_i=\hat{c}_i
    \quad  i=1,\ldots,N_{b}.
  \label{eqn:particle_hole_hfb}
\end{equation}
Then $\{\hat{b}^{\dagger}_{i},\hat{b}_{i}\}_{i=1}^{N_{b}}$ satisfy the
CAR, and the generalized Slater determinant~\eqref{eqn:slater_hfb} satisfy
\begin{equation}
  \hat{b}_{i}\ket{\Psi}=0, \quad i=1,\ldots, N_{b},
  \label{}
\end{equation}
i.e. it is a quasi-particle vacuum state. The set of all such generalized Slater determinants is denoted by
$\mathcal{S}^{\text{HFB}}$.

We define the density matrix $\rho$, and the pair matrix $\kappa$ 
associated with a state $\ket{\Psi}$ as
\begin{equation}
  \rho_{pq}:=\braket{\hat{a}_{q}^{\dagger} \hat{a}_{p}}, \quad
  \kappa_{pq}:=\braket{\hat{a}_{q} \hat{a}_{p}}, \quad 
  p,q=1,\ldots,N_b.
  \label{eqn:dm_pair}
\end{equation}
Note that $\ket{\Psi}\in \mathcal{S}^{\text{HFB}}$ does not necessarily
have a well defined particle number, i.e. $\ket{\Psi}$ is not an
eigenstate of $\hat{N}$, and the pair matrix $\kappa$ may not vanish.
From the CAR we find that the density matrix and the pair matrix satisfy
the symmetry properties
\begin{equation}
  \rho^{\dagger} = \rho, \quad  \kappa^{\top} = - \kappa.
  \label{}
\end{equation}
We also define generalized density matrix
\begin{equation}
  R_{pq} := \braket{\hat{\alpha}_{q} ^ {\dagger} \hat{\alpha}_{p}},
  \label{}
\end{equation}
and its block structure is recorded in Proposition \ref{prop:dm_hfb}.

\begin{prop}\label{prop:dm_hfb}
  In a matrix block form, the generalized density matrix $R$ can be written  as
  \begin{equation}\label{eqn:GDM_form}
    R = \left( \begin{array} { c c } { \rho } & { \kappa } \\ { - \overline
      \kappa} & {I-  \overline \rho} \end{array} \right).
  \end{equation}
  where $I$ is the identity matrix of size $N_{b}$.  Furthermore, $R$ is
Hermitian.
\end{prop}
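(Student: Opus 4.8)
The plan is to verify the block decomposition \eqref{eqn:GDM_form} entry by entry and then establish Hermiticity by a short direct argument. The only ingredients needed are the CAR \eqref{eqn:anti_commutation} for $\{\hat{a}_p^\dagger,\hat{a}_q\}$ and $\{\hat{a}_p^\dagger,\hat{a}_q^\dagger\}$, the definitions \eqref{eqn:dm_pair}, and the already-noted facts $\rho^\dagger=\rho$, $\kappa^\top=-\kappa$; in particular the statement holds for any $\ket{\Psi}\in\mathcal{F}$, not merely for generalized Slater determinants, and does not use the quasi-particle vacuum structure at all.

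First I would split the index set $\{1,\ldots,2N_b\}$ into the two halves $\{1,\ldots,N_b\}$ and $\{N_b+1,\ldots,2N_b\}$, recalling from \eqref{eqn:alpha_group} that $\hat{\alpha}_p=\hat{a}_p$, $\hat{\alpha}_{N_b+p}=\hat{a}_p^\dagger$, and correspondingly $\hat{\alpha}_q^\dagger=\hat{a}_q^\dagger$, $\hat{\alpha}_{N_b+q}^\dagger=\hat{a}_q$ for $p,q=1,\ldots,N_b$. The $(1,1)$ block is then immediate, $R_{pq}=\braket{\hat{a}_q^\dagger\hat{a}_p}=\rho_{pq}$, and likewise the $(1,2)$ block, $R_{p,N_b+q}=\braket{\hat{a}_q\hat{a}_p}=\kappa_{pq}$. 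For the $(2,1)$ block, $R_{N_b+p,q}=\braket{\hat{a}_q^\dagger\hat{a}_p^\dagger}$; using $\{\hat{a}_p^\dagger,\hat{a}_q^\dagger\}=0$ this equals $-\braket{\hat{a}_p^\dagger\hat{a}_q^\dagger}=-\overline{\braket{\hat{a}_q\hat{a}_p}}=-\overline{\kappa}_{pq}$, the middle step being the identity $\overline{\braket{\Psi|\hat{O}|\Psi}}=\braket{\Psi|\hat{O}^\dagger|\Psi}$ applied with $\hat{O}=\hat{a}_q\hat{a}_p$. For the $(2,2)$ block, $R_{N_b+p,N_b+q}=\braket{\hat{a}_q\hat{a}_p^\dagger}$; applying $\{\hat{a}_q,\hat{a}_p^\dagger\}=\delta_{pq}$ gives $\delta_{pq}-\braket{\hat{a}_p^\dagger\hat{a}_q}=\delta_{pq}-\rho_{qp}=(I-\overline{\rho})_{pq}$, where the last step uses $\rho_{qp}=\overline{\rho}_{pq}$. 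Assembling the four blocks yields \eqref{eqn:GDM_form}.

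For Hermiticity the cleanest route is again $\overline{\braket{\Psi|\hat{O}|\Psi}}=\braket{\Psi|\hat{O}^\dagger|\Psi}$ together with $(\hat{\alpha}_q^\dagger\hat{\alpha}_p)^\dagger=\hat{\alpha}_p^\dagger\hat{\alpha}_q$, which holds because $(\hat{\alpha}_q^\dagger)^\dagger=\hat{\alpha}_q$ by the very construction of the grouped operator. Hence $\overline{R_{pq}}=\braket{\hat{\alpha}_p^\dagger\hat{\alpha}_q}=R_{qp}$ for all $p,q\in\{1,\ldots,2N_b\}$, i.e. $R=R^\dagger$. Alternatively one can check it block by block from \eqref{eqn:GDM_form}: $\rho^\dagger=\rho$; $\kappa^\dagger=\overline{\kappa^\top}=-\overline{\kappa}$, which is the $(2,1)$ block; and $\overline{\rho}$ is Hermitian since $\rho$ is, so $I-\overline{\rho}$ is Hermitian.

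There is no genuine obstacle here; the proof is a routine verification. The only points that demand care are the bookkeeping of which half of the index range each of $\hat{\alpha}_p$ and $\hat{\alpha}_q^\dagger$ belongs to, and keeping transpose, entrywise conjugation, and Hermitian conjugation distinct when rewriting $\braket{\hat{a}_p^\dagger\hat{a}_q^\dagger}$ and $\braket{\hat{a}_p^\dagger\hat{a}_q}$ in terms of $\overline{\kappa}$ and $\overline{\rho}$. The sign in the $(2,1)$ block is the one place a slip is easy, and it is precisely the reflection of the antisymmetry $\kappa^\top=-\kappa$.
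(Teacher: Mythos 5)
Your proof is correct and follows essentially the same route as the paper: a block-by-block evaluation of $R_{pq}=\braket{\hat{\alpha}_q^\dagger\hat{\alpha}_p}$ using the CAR and the symmetry properties $\rho^\dagger=\rho$, $\kappa^\top=-\kappa$ (your handling of the $(2,1)$ block by anticommuting first and then conjugating is an equivalent reshuffling of the paper's steps, and your global Hermiticity argument via $\overline{\braket{\hat{\alpha}_q^\dagger\hat{\alpha}_p}}=\braket{\hat{\alpha}_p^\dagger\hat{\alpha}_q}$ is a slightly cleaner variant of the paper's block-wise check, which you also include). Your observation that the result holds for arbitrary $\ket{\Psi}\in\mathcal{F}$, not just generalized Slater determinants, is accurate and consistent with how the paper states the definitions.
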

\begin{proof}
  According to the block partition of $\hat{\alpha}$, let us first
  partition $R$ in a $2\times 2$ matrix block form as 
  \begin{equation}\label{eqn:GDM_form_tmp}
    R = \left( \begin{array} { c c } { A } & { B } \\ {
    C} & {D} \end{array} \right).
  \end{equation}
  Then clearly the $(1,1)$ matrix block $A=\rho$, and the $(1,2)$ matrix
  block $B=\kappa$. The $(2,1)$ block is
  \[
  C_{pq} = \braket{\hat{a}^{\dagger}_{q} \hat{a}^{\dagger}_{p}} =
  \overline{\braket{\hat{a}_{p} \hat{a}_{q}}} =
  \overline{\kappa}_{qp}.
  \]
  Using the condition $\kappa=-\kappa^{\top}$, we have
  $\overline{\kappa}_{qp} = -\overline{\kappa}_{pq}$, and hence
  $C=-\overline{\kappa}$.  
  
  Similarly, the $(2,2)$-block can be computed as
  \[
  D_{pq} = \braket{\hat{a}_{q} \hat{a}^{\dagger}_{p}} =
  \delta_{pq} - \braket{\hat{a}^{\dagger}_{p} \hat{a}_{q}} = \delta_{pq}
  - \rho_{qp}.
  \]
  Since $\rho=\rho^{\dagger}$, we have $\rho_{qp}=\overline{\rho}_{pq}$, and
  hence $D=I-\overline{\rho}$. This proves the
  form~\eqref{eqn:GDM_form}.

  Again using symmetry properties of $\rho,\kappa$, we have
  \[
    R^{\dagger} = \left( \begin{array} { c c } { \rho^{\dagger} } & {
      -\kappa^{\top} } \\ { \kappa^{\dagger} } & {I-  \overline
      \rho^{\dagger}} \end{array} \right) = \left( \begin{array} { c c } { \rho } & { \kappa } \\ { - \overline
      \kappa} & {I-  \overline \rho} \end{array} \right) = R. 
    \label{}
  \]
  Hence $R$ is Hermitian.
\end{proof}

For $\ket{\Psi}\in\mathcal{S}^{\text{HFB}}$,   we can apply Wick's theorem
in Proposition \ref{prop:wick} again to evaluate expectation values
\begin{equation}
  \begin{split}
    \braket{\hat{a}_{p}^{\dagger} \hat{a}_{q}^{\dagger} \hat{a}_{s}
    \hat{a}_{r}} 
    =& \braket{\hat{a}_{p}^{\dagger}\hat{a}_{r}}\braket{\hat{a}_{q}^{\dagger}\hat{a}_{s}}
    - \braket{\hat{a}_{p}^{\dagger}\hat{a}_{s}}\braket{\hat{a}_{q}^{\dagger}\hat{a}_{r}}
    + \braket{\hat{a}_{p}^{\dagger}\hat{a}^{\dagger}_{q}}\braket{\hat{a}_{s}\hat{a}_{r}} 
    \\
    =& \rho_{rp}\rho_{sq} - \rho_{sp}\rho_{rq} + \overline{\kappa}_{pq}\kappa_{rs}.
  \end{split}
  \label{}
\end{equation}
Due to the breaking of the particle number symmetry, there is an anomalous term in the energy defined in terms of the pair matrix
$\kappa$ and its conjugate. 
Furthermore, $R$ can be written as
\[
  R_{pq} = \sum_{i=1}^{N_{b}} \Phi_{pi} \overline{\Phi}_{qi}, \quad
  p,q=1,\ldots, 2N_{b},
\]
or in matrix form
\[
  R = \Phi\Phi^{\dagger}.
\]
Hence $R$ is an idempotent matrix, and we always have $\Tr[R]=N_{b}$. In
the HFB theory, the number of electrons cannot be set by the
rank condition of the generalized density matrix $R$. Instead  it is givne by the trace of the density matrix $\rho$, i.e.
$\Tr[\rho]=N$.  The set of the generalized density matrix is defined as 
\begin{equation}
  \mathcal{D}^{\text{HFB}} = \left\{ R\in \CC^{2N_{b}\times 2N_{b}} \vert 
  R^{\dagger} = R, \quad R^2=R,\quad \Tr[\rho]=N \right\}.
  \label{}
\end{equation}

The total energy functional of the HFB theory is
\begin{equation}
  \begin{split}
    \mathcal{E}^{\text{HFB}} [R] := & 
    \braket{\hat{H}-\mu \hat{N}} \\ 
    = & \sum_{pq} (h^{0}_{pq}-\mu \delta_{pq}) \rho_{qp} +
    \frac14 \sum_{pqrs} V_{pqrs}(\rho_{rp}\rho_{sq} -
    \rho_{sp}\rho_{rq}+ \overline{\kappa}_{pq}\kappa_{rs})\\
    =& \sum_{pq} (h^{0}_{pq}-\mu \delta_{pq}) \rho_{qp} +
    \frac12 \sum_{pqrs} V_{pqrs}\rho_{rp}\rho_{sq} +
    \frac14 \sum_{pqrs} V_{pqrs}\overline{\kappa}_{pq}\kappa_{rs}.
  \end{split}
  \label{eqn:HFB_Efunctional}
\end{equation}

Let $R$ be the minimizer of the following problem
\begin{equation}
  E^{\text{HFB}}_{0} = \inf_{R\in \mathcal{D}^\text{HFB}}
  \mathcal{E}^{\text{HFB}} [ R ].
  \label{eqn:HFB_DM_minimize}
\end{equation}
with its energy functional defined in Eq.~\eqref{eqn:HFB_Efunctional}. Similar to Eq.~\eqref{eqn:Hamiltonian_HF}, the linearized Hamiltonian
due to variation with respect to $\rho$ is (the only addition is a
diagonal term
due to the chemical potential)
\begin{equation}
  h_{pq}[\rho] := \frac{\partial \mathcal{E}^{\text{HFB}}[R]}{ \partial \rho_{qp} } = 
  h^{0}_{pq}-\mu \delta_{pq} + \sum_{rs} V_{prqs} \rho_{sr}.
  \label{eqn:Hamiltonian_HFB}
\end{equation}
Due to the pair matrix, we also define the pairing field (also called
the pairing potential, or gap function in the setting of translation-invariant systems) as
\begin{equation}
  \Delta_{pq}[\kappa] := \frac { \partial \mathcal{E}^{\text{HFB}}[R] } { \partial \overline
\kappa_{pq} } = \frac14 \sum_{rs} V_{pqrs} \kappa_{rs} - \frac14
\sum_{rs} V_{qprs} \kappa_{rs} = \frac12 \sum_{rs} V_{pqrs} \kappa_{rs}.
  \label{eqn:HFB_delta}
\end{equation}
Note the difference between the order of
the $p,q$ indices in $h$ and $\Delta$. Here we have used the anti-symmetry
property of $\kappa$ (and hence $\overline{\kappa}$), as well as the
anti-symmetry property of $V$. 
Similar to $\rho,\kappa$, the matrices
$h[\rho],\Delta[\kappa]$ satisfy the symmetry properties
\begin{equation}
  h=h^{\dagger}, \quad \Delta=-\Delta^{\top}.
  \label{eq:delta_property}
\end{equation}

Then the Euler-Lagrange equation corresponding to the minimization problem \eqref{eqn:HFB_DM_minimize} gives the following nonlinear eigenvalue problem
\begin{equation}
  \mathscr{H}[R] \Phi = \Phi \Lambda,  \quad R=\Phi\Phi^{\dagger}.
  \label{eqn:HFB_scf}
\end{equation}
Here the quasi-particle Hamiltonian $\mathscr{H}[R]$ is defined as
\begin{equation}
 \mathscr{H}[R] = \left( \begin{array} { c c } { h[\rho] } & {
   \Delta[\kappa] } \\ { -
   \overline{\Delta}[\kappa]} & { - \overline{h}[\rho]} \end{array}
   \right).
  \label{eqn:HFB_H}
\end{equation}
$\mathscr{H}[R]$ is a Hermitian matrix due to the symmetry properties of $h,\Delta$. The eigenvalues are ordered non-decreasingly as 
$\varepsilon_1\le \varepsilon_2 \ldots\le \varepsilon_{N_b}$, and 
$\Lambda=\text{diag}(\varepsilon_1,\ldots,\varepsilon_{N_b})$. The eigenvectors $\Phi\in\CC^{2 N_b\times N_{b}}$ are an orthonormal set of
$N_{b}$ vectors, and $(\Lambda,\Phi)$ are the eigenpairs
corresponding to the lowest $N_{b}$ eigenvalues of
$\mathscr{H}[R]$. The chemical potential $\mu$ should be adjusted
so that 
\begin{equation}
  N = \langle \Psi  \vert  \hat{N}  \vert  \Psi \rangle = \Tr[\rho].
  \label{eqn:Ne_condition}
\end{equation}
Again, the \textit{aufbau principle}, or the
choice of taking the algebraically lowest $N_{b}$ eigenpairs, does not
always hold, but the principle can be rigorously proved for certain
choice of two-particle interaction $V$~\cite{Bach1994,LewinPaul}. 
The eigenvalue problem~\eqref{eqn:HFB_scf} is called the Hartree-Fock-Bogoliubov (HFB)
equation, which is also called the Bogoliubov-de Genns (BdG) equation.

\section{Solving Hartree-Fock-Bogoliubov equations}\label{sec:solvehfb}

Similar to Hartree-Fock equations, the numerical solution of HFB equations is often obtained using self-consistent field iterations. At the $\ell$-th iteration, the density matrix, pair matrix are denoted by $\rho^{(\ell)},\kappa^{(\ell)}$, respectively. We may then construct $R^{(\ell)}$ and the quasi-particle Hamiltonian $\mathscr{H}[R^{(\ell)}]$. The lowest $N_b$ eigenpairs of $\mathscr{H}[R^{(\ell)}]$ further can be used to define $\rho^{(\ell+1)},\kappa^{(\ell+1)}$, and may continue the iteration until reaching self-consistency.  Then we may compute the total energy using Eq.~\eqref{eqn:HFB_Efunctional}. 
We discuss the diagonalization method for solving HFB equations for general quasi-particle Hamiltonians in Section \ref{sec:generalhfb}.  In practice, the quasi-particle Hamiltonian often exhibits additional sparsity pattern due to the separation between spatial and spin degrees of freedom. In such a case, the Hamiltonian can be block-diagonalized. This will be discussed in Section \ref{sec:symmetry}.

\subsection{General formulation}\label{sec:generalhfb}
The quasi-particle Hamiltonian $\mathscr{H}$ is a highly structured matrix. The corresponding structure of its eigenpairs is recorded in Proposition \ref{prop:hfb_eigen}.
\begin{prop}\label{prop:hfb_eigen}
The eigenvalues of $\mathscr{H}$ are real and symmetric with respect to
$0$. Assuming there is no zero eigenvalue, $\mathscr{H}$
can be diagonalized in the following form
\begin{equation}
  \mathscr{H} \left( \begin{array} { c c } { U } & { \overline { V } }
    \\ { V } & { \overline { U } } \end{array} \right) = \left(
    \begin{array} { c c } { U } & { \overline { V } } \\ { V } & {
      \overline { U } } \end{array} \right) \left( \begin{array} { c c }
        { \Lambda } & { 0 } \\ { 0 } & { - \Lambda } \end{array}
        \right). 
  \label{eqn:HFB_diag}
\end{equation}
Here $\Lambda\in\RR^{N_b\times N_b}$ is a diagonal matrix with positive entries, and the
matrix 
\[
\left( \begin{array} { c c } { U } & { \overline { V } }
  \\ { V } & { \overline { U } } \end{array} \right)\in U(2N_b).
\]
\end{prop}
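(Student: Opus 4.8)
The plan is to exploit the particle--hole symmetry of $\mathscr{H}$ encoded by the matrix
\[
  \mathcal{P} = \begin{pmatrix} 0 & I \\ I & 0 \end{pmatrix}, \quad \mathcal{P}^2 = I,
\]
acting together with complex conjugation. First I would verify the intertwining relation $\mathcal{P}\,\overline{\mathscr{H}}\,\mathcal{P} = -\mathscr{H}$: writing $\mathscr{H}=\begin{pmatrix} h & \Delta \\ -\overline{\Delta} & -\overline{h}\end{pmatrix}$ and using $h=h^{\dagger}$ (so $\overline{h}=h^{\top}$) and $\Delta=-\Delta^{\top}$, a direct block computation gives $\mathcal{P}\,\overline{\mathscr{H}}\,\mathcal{P} = \begin{pmatrix} -\overline{h}^{\top} & ... \end{pmatrix}$ that collapses precisely to $-\mathscr{H}$ after applying the two symmetry identities. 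Since $\mathscr{H}$ is Hermitian (already noted in the excerpt), its spectrum is real; and the relation just derived shows that if $\mathscr{H}\xi=\lambda\xi$ then $\mathscr{H}(\mathcal{P}\overline{\xi}) = -\lambda(\mathcal{P}\overline{\xi})$, so the spectrum is symmetric about $0$. This proves the first assertion.

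Next, assuming $0$ is not an eigenvalue, the $2N_b$ eigenvalues split into $N_b$ strictly positive ones and $N_b$ strictly negative ones, related in pairs $\lambda \leftrightarrow -\lambda$ by the involution $\xi \mapsto \mathcal{P}\overline{\xi}$. I would collect orthonormal eigenvectors $\{\xi_j\}_{j=1}^{N_b}$ for the positive eigenvalues $\varepsilon_1,\dots,\varepsilon_{N_b}>0$ and assemble them as columns of a $2N_b\times N_b$ matrix $W=\begin{pmatrix} U \\ V \end{pmatrix}$, so that $\mathscr{H}W = W\Lambda$ with $\Lambda=\mathrm{diag}(\varepsilon_1,\dots,\varepsilon_{N_b})$. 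Applying the involution columnwise, $\mathcal{P}\overline{W}$ has columns that are orthonormal eigenvectors for the eigenvalues $-\varepsilon_j$, i.e. $\mathscr{H}(\mathcal{P}\overline{W}) = (\mathcal{P}\overline{W})(-\Lambda)$; and $\mathcal{P}\overline{W} = \begin{pmatrix} 0 & I \\ I & 0\end{pmatrix}\begin{pmatrix}\overline{U}\\ \overline{V}\end{pmatrix} = \begin{pmatrix}\overline{V}\\ \overline{U}\end{pmatrix}$. Stacking the two blocks gives exactly Eq.~\eqref{eqn:HFB_diag}.

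It remains to check that $\begin{pmatrix} U & \overline{V} \\ V & \overline{U}\end{pmatrix}\in U(2N_b)$, i.e. that the full set of $2N_b$ eigenvectors is orthonormal. Orthonormality of the first $N_b$ columns is the normalization of $\{\xi_j\}$, hence $U^{\dagger}U + V^{\dagger}V = I$; the same for the last $N_b$ columns follows by conjugation; and the cross term $U^{\dagger}\overline{V} + V^{\dagger}\overline{U} = \xi_i^{\dagger}(\mathcal{P}\overline{\xi}_j)$ must vanish because $\xi_i$ (eigenvalue $\varepsilon_i>0$) and $\mathcal{P}\overline{\xi}_j$ (eigenvalue $-\varepsilon_j<0$) belong to eigenspaces for \emph{distinct} eigenvalues of the Hermitian matrix $\mathscr{H}$ — here the no-zero-eigenvalue hypothesis guarantees $\varepsilon_i \ne -\varepsilon_j$ — so they are automatically orthogonal. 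The main obstacle, if any, is the bookkeeping in the case of \emph{degenerate} positive eigenvalues: one must argue that a self-consistent choice of orthonormal basis within each positive eigenspace automatically produces (via the involution) the matching orthonormal basis of the mirror negative eigenspace, and that no further freedom can spoil the block structure $\bigl(\overline{V},\overline{U}\bigr)^{\top}$; this is handled by simply \emph{defining} the negative-eigenvalue eigenvectors to be the images under $\xi\mapsto\mathcal{P}\overline{\xi}$ rather than choosing them independently. Everything else is the routine block algebra indicated above.
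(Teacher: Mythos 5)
Your proposal is correct and takes essentially the same route as the paper: the intertwining relation $\mathcal{P}\,\overline{\mathscr{H}}\,\mathcal{P}=-\mathscr{H}$ is just a repackaging of the paper's direct block computation showing that $(u^{\top},v^{\top})^{\top}\mapsto(\overline{v}^{\top},\overline{u}^{\top})^{\top}$ maps $\lambda$-eigenvectors to $(-\lambda)$-eigenvectors. Your handling of unitarity, the cross-block orthogonality, and degenerate eigenvalues is more explicit than the paper's one-line appeal to Hermiticity, but the underlying argument is identical.
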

\begin{proof}
  Without loss of generality let $(u^{\top},v^{\top})^{\top}$ be an eigenvector with
  eigenvalue $\lambda>0$, i.e.
  \[
  \mathscr{H} \begin{pmatrix}u\\ v\end{pmatrix} =
    \lambda\begin{pmatrix}u\\ v\end{pmatrix},
  \]
  or 
  \[
  hu + \Delta v = \lambda u, \quad -\overline{h} v -\overline{\Delta} u
  = \lambda v.
  \]
  Taking the complex conjugate and negating both equations, we have
  \[
  -\overline{h} \overline{u} - \overline{\Delta} \overline{v} = -\lambda
  \overline{u}, \quad h \overline{v} +\Delta \overline{u} = \lambda \overline{v}.
  \]
  In the matrix form, this becomes
  \[
  \mathscr{H} \begin{pmatrix}\overline{v}\\ \overline{v}\end{pmatrix} =
    -\lambda\begin{pmatrix}\overline{v}\\ \overline{v}\end{pmatrix}.
  \]
  This holds for every eigenpair, and we prove the form of the
  decomposition~\eqref{eqn:HFB_diag}. The eigenvectors form a unitary matrix
  directly follow from that $\mathscr{H}$ is Hermitian.
\end{proof}

Due to Proposition~\ref{prop:hfb_eigen}, the lowest $N_b$ eigenvalues are always non-positive and are denoted by 
$$
\Lambda=\text{diag}(\varepsilon_{1},\ldots,\varepsilon_{N_{b}}), \quad \varepsilon_i\le 0.
$$ 
Then according to Eq.~\eqref{eqn:HFB_scf}, we find that
\[
\Phi = \left( \begin{array} { l } { \overline{V} } \\ { \overline{U}}  \end{array}\right).
\]
The generalized density matrix can be constructed from the eigenvectors associated with negative eigenvalues as
\begin{equation}
  R = \begin{pmatrix} 
    \overline { V } \\ \overline { U }   \end{pmatrix}
  \begin{pmatrix} V^{\top} & U^{\top}  \end{pmatrix}.
  \label{eqn:gdm}
\end{equation}
Compared to Eq.~\eqref{eqn:GDM_form}, we find that 
\begin{equation}
  \rho = \overline{V} V^{\top}, \quad \kappa = \overline{V} U^{\top}.
  \label{}
\end{equation}

It remains to show that the creation and annihilation operators defined in Eq.~\eqref{eqn:creation_hfb} satisfy the CAR. Rewrite
Eq.~\eqref{eqn:creation_hfb} as
\[
\hat{c}_i^{\dagger} = \sum_{p=1}^{N_b} \hat{\alpha}_p^{\dagger}
\overline{V}_{pi} + \sum_{q=1}^{N_{b}} \hat{\alpha}_q \overline{U}_{qi},
\quad i=1,\ldots,N_{b}.
\]
Correspondingly the annihilation operators are defined as
\[
\hat{c}_i = \sum_{p=1}^{N_b} \hat{\alpha}_p
V_{pi} + \sum_{q=1}^{N_{b}} \hat{\alpha}_q^{\dagger} U_{qi},
\quad i=1,\ldots,N_{b}.
\]
If we group $\begin{pmatrix} \hat{c}^{\dagger} & \hat{c} \end{pmatrix}$
as a row vector, just as $\hat{\alpha}^{\dagger}$, then we have the
matrix form
\begin{equation}
  \begin{pmatrix} \hat{c}^{\dagger} & \hat{c} \end{pmatrix} =
  \begin{pmatrix} \hat{a}^{\dagger} & \hat{a} \end{pmatrix} 
  \begin{pmatrix} 
    \overline { V } & U  \\ \overline{U} & V  
  \end{pmatrix},
  \label{eqn:hfb_op_matrix}
\end{equation}
which is unitary transformation of the fermionic creation and
annihilation operators from Proposition \ref{prop:hfb_eigen}. Therefore  $\{\hat{c}^{\dagger}_{i},\hat{c}_{i}\}_{i=1}^{N_{b}}$ indeed satisfy the
CAR, 
$\ket{\Psi}\in\mathcal{S}^{\text{HFB}}$ is  a quasi-particle
vacuum, and Wick's theorem is applicable.


Due to the symmetry of eigenvalues according to Proposition \ref{prop:hfb_eigen}, the existence of a positive gap $\Delta_{g}:=\varepsilon_{N_{b}+1}-\varepsilon_{N_{b}}$ is equivalent to the statement that $\mathscr{H}$ does not have a zero eigenvalue. Then the generalized density matrix can be compactly written using a matrix function as
\[
R = \mathbbm{1}_{(-\infty,0)}(\mathscr{H}),
\]
where $\mathbbm{1}_{(-\infty,0)}(\cdot)$ is the indicator function on $(-\infty,0)$. 

In the case when $\mathscr{H}$ has zero eigenvalues or if the energy gap $\Delta_{g}$ is small, we need to employ the finite temperature formulation.  Let $\beta=T^{-1}$ be the inverse temperature (the Boltzmann constant is taken to be $1$), then the generalized density matrix should be weighted by the Fermi-Dirac distribution as
\[
R=f_{\beta}(\mathscr{H}):=\frac{1}{e^{\beta\mathscr{H}}+1}.
\]
In the finite temperature formulation, $R$ is the minimizer of the Helmholtz free energy, defined as
\begin{equation}
  \mathcal{F} = E - \frac{1}{\beta} S.
  \label{eqn:free_energy}
\end{equation}
Here 
\begin{equation}
  S = -\sum_{i=1}^{2N_{b}} [f_{i} \log f_{i} + (1-f_{i})\log (1-f_{i})]
  \label{eqn:entropy}
\end{equation}
is the entropy, and $f_{i} = (1+e^{\beta \varepsilon_{i}})^{-1}$ is the Fermi-Dirac distribution.

When solving HFB equations self-consistently, we should note that there are in fact ``two chemical potentials''. One chemical potential is used to separate the wanted and unwanted eigenpairs. This chemical potential is always set to $0$ due to Proposition \ref{prop:hfb_eigen}. The other chemical potential, or the ``true'' chemical potential denoted by $\mu$, is used to control the number of electrons so that Eq.~\eqref{eqn:Ne_condition} is satisfied. These two chemical potentials coincide in standard Hartree-Fock type of calculations, and can be identified directly after a single step of diagonalization.  In HFB, the number of electrons cannot be exactly controlled even if we fully diagonalize the system as in Eq.~\eqref{eqn:HFB_diag}.  This is because the total number operator $\hat{N}$ does not commute with $\hat{H}$. 
Hence the constraint on the number of electrons can only be satisfied by
dynamic adjustment of the chemical potential $\mu$.  For clarity, we summarize the differences between HF and HFB in Table \ref{tab:compare_hf_hfb}, when diagonalization type methods are used. The reason why HFB is not suitable for iterative solver is that the number of eigenpairs to compute is always $N_b$, i.e. half of the eigenpairs. In this regime, Krylov type iterative solvers are \textit{not} efficient.
\begin{table}[!htp]
  \centering
  \begin{tabular}{c|c|c}
    \toprule
    & HF & HFB\\
    \hline
    Matrix size & $N_{b}$ & $2N_{b}$\\
    \hline
    Number of eigenpairs to compute & $N$ & $N_{b}$\\
    \hline
    Number of ``holes'' & $N$ & $N_{b}$\\
    \hline
    Number of ``particles'' & $N_{b}-N$ & $N_{b}$\\
    \hline
    Fixed number of particles & Yes & No \\
    \hline
    Directly obtain $\mu$ after diagonalization& Yes & No \\
    \hline
    Good for iterative methods &  Yes & No\\
    \bottomrule
  \end{tabular}
  \caption{Comparison between parameters for solving Hartree-Fock theory and
  Hartree-Fock-Bogoliubov theory based on a diagonalization procedure.}
  \label{tab:compare_hf_hfb}
\end{table}

%

\subsection{Spin symmetry}\label{sec:symmetry}
In the previous discussion, each site $p$ refers to a general spin orbital. In practice $p$ is often resolved into a spatial index $i$, and a spin index $\sigma\in\{\uparrow,\downarrow\}$. We denote by $\widetilde{N}_{b}=N_{b}/2$ the number of spatial orbitals. We further assume that 
partitioned according to the spin index, $h$ is a $2\times 2$ block diagonal matrix as
\begin{equation}
  h =
  \begin{pmatrix}
    h_{\uparrow} & 0\\
    0 & h_{\downarrow}
  \end{pmatrix}.
  \label{eqn:h_block}
\end{equation}
In other words, we assume that there is no spin-orbit coupling effect in $h$.  The pairing field can be expressed generally in the matrix block form as
\begin{equation}
  \label{eqn:Delta_block}
  \Delta =
  \begin{pmatrix}
    \Delta_{\uparrow\uparrow} & \Delta_{\uparrow\downarrow}\\
    \Delta_{\downarrow\uparrow} & \Delta_{\downarrow\downarrow}
  \end{pmatrix}.
\end{equation}
Define the grouped creation operator
$\hat{\alpha}^{\dagger}=(\hat{a}^{\dagger}_{\uparrow},\hat{a}^{\dagger}_{\downarrow},\hat{a}_{\uparrow},\hat{a}_{\downarrow})$,
and let $\hat{\alpha}$ be the corresponding annihilation operator, then
the corresponding quasi-particle Hamiltonian takes the form
\begin{equation}
  \mathscr{H} =
  \left(
    \begin{array} {cccc}
      { h_{\uparrow} } & { 0 } & { \Delta_{\uparrow\uparrow} } & {\Delta_{\uparrow\downarrow }} \\
      { 0 } & { h_{\downarrow} } & {\Delta_{\downarrow\uparrow}} & {\Delta_{\downarrow\downarrow} } \\
      { - \overline{\Delta}_{\uparrow\uparrow}  } & { - \overline{\Delta}_{\uparrow\downarrow} } & {-\overline{h}_{\uparrow}  } & { 0 } \\
      { - \overline{\Delta}_{\downarrow\uparrow}  } & { - \overline{\Delta}_{\downarrow\downarrow}  } & { 0 } & { - \overline{h}_{\downarrow} }
    \end{array}
  \right).
  \label{eqn:HFB_special}
\end{equation}

The pairing field $\Delta$ can be generally decomposed as
\begin{equation}
  \Delta = \sum_{\alpha=0}^{3}\chi_\alpha \otimes \Delta_\alpha = 
  \begin{pmatrix}
    -\Delta_1+\I\Delta_2 & \Delta_{0}+\Delta_3\\
    -\Delta_{0}+\Delta_3 & \Delta_1+\I\Delta_2
  \end{pmatrix},
  \label{eqn:Delta_decompose}
\end{equation}
where
\begin{equation}
\chi_0=\begin{pmatrix}
  0 & 1\\
  -1 & 0
\end{pmatrix}, \quad 
  \chi_1=
  \begin{pmatrix}
    -1 & 0\\0 & 1
  \end{pmatrix}\text{, } 
  \chi_2=
  \begin{pmatrix}
    \I & 0\\0 & \I
  \end{pmatrix},
  \quad 
  \chi_3 =
  \begin{pmatrix}
    0 & 1\\1 & 0
  \end{pmatrix}.
  \label{eqn:chi_matrix}
\end{equation}
The matrices $\{\chi_\alpha\}$ are related to the Pauli matrices, and its convention follows that in~\cite{zhu2016bogoliubov}.  Since $\Delta$ should be an anti-symmetric matrix, we have
\begin{equation}
  \Delta_0=\Delta_{0}^{\top}, \quad \Delta_1 = -\Delta_1^\top, \quad \Delta_2= -\Delta_2^\top, \quad \Delta_3 = -\Delta_3^\top.
  \label{}
\end{equation}

When only $\Delta_{0}$ is present, it is called the \textit{spin-singlet coupling} regime,  and the total spin satisfies $S_{z}=0$. This is the most common regime, as is found in conventional superconductors, iron-based superconductors etc. Correspondingly, when $\Delta_{0}$ vanishes and only $\Delta_{1},\Delta_{2},\Delta_{3}$ are present, it is called the \textit{spin-triplet coupling} regime, and the total spin satisfies $S_{z}=1$. The spin-triplet coupling is less common, and has been found in $^3$He superfluid as well as certain heavy-fermion superconductors~\cite{zhu2016bogoliubov}.

In order to reduce the dimension of $\mathscr{H}$, following the structure in~\eqref{eqn:HFB_special}, $\Delta$ needs to be either a block-diagonal matrix, or a block-off-diagonal matrix. 

{\bf $\Delta$ is a block-off-diagonal matrix}. In this case, only $\Delta_0,\Delta_3$ can be present. 
The quasi-particle Hamiltonian~\eqref{eqn:HFB_special} is then block diagonalized in terms of the following two Hermitian matrix blocks.
\begin{equation}
  \mathscr{H}_{\uparrow\downarrow}:=
  \begin{pmatrix}
    h_{\uparrow}  &  \Delta_0+\Delta_3\\
    \overline{\Delta}_0-\overline{\Delta}_3 & - \overline{h}_{\downarrow}
  \end{pmatrix}, \quad  \mathscr{H}_{\downarrow\uparrow}:= 
  \begin{pmatrix} h_{\downarrow}  & -\Delta_0+\Delta_3\\
    -\overline{\Delta}_0-\overline{\Delta}_3       & - \overline{h}_{\uparrow}
  \end{pmatrix}.
  \label{eq:hfb_singlet}
\end{equation}
Define $J=\begin{pmatrix} 0 & I_{\widetilde N_{b}} \\ I_{\widetilde N_{b}} & 0\end{pmatrix}$, then we have the relation
\[
  \mathscr{H}_{\downarrow\uparrow} = -J
  \overline{\mathscr{H}}_{\uparrow\downarrow} J.
\]
This suggests that for an eigenpair of $\mathscr{H}_{\uparrow\downarrow}$, 
\begin{equation}
  \label{eq:3}
  \mathscr{H}_{\uparrow\downarrow} \begin{pmatrix}u\\v\end{pmatrix}=\lambda
  \begin{pmatrix}u\\v\end{pmatrix}, \quad \lambda\in \RR,
\end{equation}
we have
\begin{equation}
  \label{eq:4}
  \mathscr{H}_{\downarrow\uparrow} \begin{pmatrix}\overline v\\\overline u    
  \end{pmatrix}=-\lambda
  \begin{pmatrix}\overline v\\\overline u \end{pmatrix}.
\end{equation}
Therefore the negative eigenvalues of $\mathscr{H}_{\uparrow\downarrow}$ can be mapped to the positive eigenvalues of $\mathscr{H}_{\downarrow\uparrow}$, and vice versa. This is in fact a direct corollary of Proposition \ref{prop:hfb_eigen}. 
In order to solve HFB, we only need to diagonalize
$\mathscr{H}_{\uparrow\downarrow}$ as 
\begin{equation}
  \label{eq:14}
  \mathscr{H}_{\uparrow\downarrow}
  \begin{pmatrix}
    U_{\uparrow\downarrow} & \overline{V}_{\downarrow\uparrow}\\
    V_{\uparrow\downarrow} & \overline{U}_{\downarrow\uparrow}\\
  \end{pmatrix}
  =
  \begin{pmatrix}
    U_{\uparrow\downarrow} & \overline{V}_{\downarrow\uparrow}\\
    V_{\uparrow\downarrow} & \overline{U}_{\downarrow\uparrow}\\
  \end{pmatrix}
  \begin{pmatrix}
    \Lambda_{\uparrow\downarrow}^+ & 0\\ 0 & -\Lambda_{\downarrow\uparrow}^+
  \end{pmatrix}
\end{equation}
where $\Lambda_{\uparrow\downarrow}^+$ and $\Lambda_{\downarrow\uparrow}^+$ are the diagonal matrices formed by the positive eigenvalues of $\mathscr{H}_{\uparrow\downarrow}$ and $\mathscr{H}_{\downarrow\uparrow}$.  Note that in general, the number of negative
eigenpairs contained in $\Lambda_{\downarrow\uparrow}^+$ may not be equal to $\widetilde{N}_{b}$. Accordingly, $U_{\uparrow\downarrow}, V_{\uparrow\downarrow}, U_{\downarrow\uparrow}, V_{\downarrow\uparrow}$ are not necessarily square matrices.  
The positive eigenpairs of $\mathscr{H}_{\uparrow\downarrow}$ can be mapped to the negative eigenpairs of
$\mathscr{H}_{\downarrow\uparrow}$. Hence, the total number of negative
eigenpairs of $\mathscr{H}$ remains $N_{b}$.
Each block has the same dimension. The density matrix associated with $\mathscr{H}$ is constructed as
\begin{equation}
  \label{eq:15}
    R = 
    \begin{pmatrix}
      \overline{V}_{\downarrow\uparrow}V_{\downarrow\uparrow}^\top & 0 & 0 & \overline{V}_{\downarrow\uparrow}U_{\downarrow\uparrow}^\top\\
      0 & \overline{V}_{\uparrow\downarrow}V_{\uparrow\downarrow}^\top &  \overline{V}_{\uparrow\downarrow}U_{\uparrow\downarrow}^\top & 0\\
      0 & \overline{U}_{\uparrow\downarrow}V_{\uparrow\downarrow}^\top & \overline{U}_{\uparrow\downarrow}U_{\uparrow\downarrow}^\top & 0\\
      \overline{U}_{\downarrow\uparrow}V_{\downarrow\uparrow}^\top & 0 & 0 & \overline{U}_{\downarrow\uparrow}U_{\downarrow\uparrow}^\top
    \end{pmatrix}
\end{equation}

Then the density matrix is  $$\rho = \begin{pmatrix}
  \overline{V}_{\downarrow\uparrow}V_{\downarrow\uparrow}^\top & 0\\
  0 & \overline{V}_{\uparrow\downarrow}V_{\uparrow\downarrow}^\top
\end{pmatrix}.
$$ 
Due to the orthogonality of the eigenvectors of $\mathscr{H}_{\uparrow\downarrow}$, the relation 
$$I-\overline{\rho}=
\begin{pmatrix}
  \overline{U}_{\uparrow\downarrow}U_{\uparrow\downarrow}^\top & 0\\
  0 & \overline{U}_{\downarrow\uparrow}U_{\downarrow\uparrow}^\top
\end{pmatrix}$$
is indeed satisfied. We may define the spin-up matrix $\rho_\uparrow=\overline{V}_{\downarrow\uparrow}V_{\downarrow\uparrow}^\top$, the spin-down density matrix  $\rho_\downarrow=\overline{V}_{\uparrow\downarrow}V_{\uparrow\downarrow}^\top$, and the pair matrix $\kappa_{\uparrow\downarrow}=\overline{V}_{\downarrow\uparrow}U_{\downarrow\uparrow}^{\top}$. 
The (spin) reduced density matrix is
\begin{equation}
  P = \begin{pmatrix} \rho_{\uparrow}  & \kappa_{\uparrow\downarrow}\\
    \kappa_{\uparrow\downarrow}^{\dagger} & I -
    \overline{\rho}_{\downarrow},
  \end{pmatrix}.
  \label{eq:gdm_singlet}
\end{equation}
The total number of electrons should satisfy
\begin{equation}
  N = \Tr[\rho_{\uparrow}] + \Tr[\rho_{\downarrow}].
  \label{}
\end{equation}
In particular, the number of electrons \textit{cannot} be evaluated as
$\Tr[P]$ as in standard Hartree-Fock calculations.

{\bf $\Delta$ is a block-diagonal matrix.}
In this case, the quasi-particle Hamiltonian \eqref{eqn:HFB_special} can be decomposed into the two following Hermitian matrices:
\begin{equation}
  \label{eq:10}
    \mathscr{H}_{\uparrow\uparrow}:=
  \begin{pmatrix}
    h_{\uparrow}  &  -\Delta_1+\I \Delta_2\\
    -\Delta_1^{\dagger}-\I\Delta_2^\dagger & -\overline{h}_{\uparrow}
  \end{pmatrix}
  \text{ and }
  \mathscr{H}_{\downarrow\downarrow}:=
  \begin{pmatrix}
    h_{\downarrow}  &  \Delta_1+\I\Delta_2\\
    \Delta_1^{\dagger}-\I\Delta_2^\dagger & -\overline{h}_{\downarrow}
  \end{pmatrix}
\end{equation}
Unlike the block-diagonal case, $\mathscr{H}_{\uparrow\uparrow}$ and $    \mathscr{H}_{\downarrow\downarrow}$ may not be directly related. However, these two matrices themselves can be diagonalized into the following two forms
\begin{equation}
  \label{eq:11}
  \mathscr{H}_{\uparrow\uparrow}
  \begin{pmatrix}
    U_{\uparrow\uparrow} & \overline V_{\uparrow\uparrow}\\
    V_{\uparrow\uparrow} & \overline U_{\uparrow\uparrow}
  \end{pmatrix}
  = 
  \begin{pmatrix}
    U_{\uparrow\uparrow} & \overline V_{\uparrow\uparrow}\\
    V_{\uparrow\uparrow} & \overline U_{\uparrow\uparrow}
  \end{pmatrix}
  \begin{pmatrix}
    \Lambda_\uparrow & 0\\
    0 & -\Lambda_\uparrow
  \end{pmatrix}
\end{equation}
and
\begin{equation}
  \mathscr{H}_{\downarrow\downarrow}
  \begin{pmatrix}
    U_{\downarrow\downarrow} & \overline V_{\downarrow\downarrow}\\
    V_{\downarrow\downarrow} & \overline U_{\downarrow\downarrow}
  \end{pmatrix}
  = 
  \begin{pmatrix}
    U_{\downarrow\downarrow} & \overline V_{\downarrow\downarrow}\\
    V_{\downarrow\downarrow} & \overline U_{\downarrow\downarrow}
  \end{pmatrix}
  \begin{pmatrix}
    \Lambda_\downarrow & 0\\
    0 & -\Lambda_\downarrow
  \end{pmatrix}
\end{equation}
By Proposition \ref{prop:hfb_eigen}, each eigenvalue problem has $\widetilde N_b$ positive eigenvalues and $\widetilde N_b$ negative eigenvalues, respectively, and all the blocks of eigenvector matrices have the same size $\widetilde N_b$. The reduced density matrices $P_{\uparrow}$ and $P_{\downarrow}$ can also be constructed separately,
\begin{equation}
  \label{eq:12}
  P_{\uparrow} =
  \begin{pmatrix}
    \rho_\uparrow & \kappa_{\uparrow\uparrow}\\
    \kappa_{\uparrow\uparrow}^\dagger & I-\overline{\rho}_\uparrow
  \end{pmatrix}
  \text{ and }
  P_{\downarrow} =
  \begin{pmatrix}
    \rho_\downarrow & \kappa_{\downarrow\downarrow}\\
    \kappa_{\downarrow\downarrow}^\dagger & I-\overline{\rho}_\downarrow
  \end{pmatrix}.
\end{equation}
Here we defined $\rho_\uparrow=\overline{V}_{\uparrow\uparrow}V_{\uparrow\uparrow}^\top, \kappa_{\uparrow\uparrow=}=\overline{V}_{\uparrow\uparrow}U_{\uparrow\uparrow}^\top$ and $\rho_\downarrow=\overline{V}_{\downarrow\downarrow}V_{\downarrow\downarrow}^\top, \kappa_{\downarrow\downarrow=}=\overline{V}_{\downarrow\downarrow}U_{\downarrow\downarrow}^\top$. The total number of electrons is still defined as $N=\Tr{[\rho_\uparrow]}+\Tr{[\rho_\downarrow]}$.

\section{Pole expansion and selected inversion method for solving large-scale HFB calculations}\label{sec:pexsi}

When the number of sites $N_b$ becomes large (e.g. $N_b\gtrsim 10^4$), the direct diagonalization of the quasi-particle Hamiltonian $\mathscr{H}$ can be prohibitively expensive. On the other hand,  if $\mathscr{H}$ is a sparse matrix, we do not need to evaluate all entries of the generalized density matrix $R$, which  is generically a dense matrix. This is because in each step of the SCF iteration, in order to update $\mathscr{H}$, we only need to
evaluate the nonzero entries of $h,\Delta$. Hence when $h_0$ is a sparse matrix and 
$V_{ijkl}$ is a sparse $4$-tensor (such as in the case of the Hubbard-type models), then $h,\Delta$ are also sparse matrices.

Define the sparsity pattern of
$h,\Delta$ as $\mathcal{S}_{h}=\{(p,q) \vert h_{pq}\ne 0\}$,
$\mathcal{S}_{\Delta}=\{(p,q) \vert \Delta_{pq}\ne 0\}$, which can be obtained by evaluating Eq. \eqref{eqn:Hamiltonian_HFB}, \eqref{eqn:HFB_delta}, respectively. Then  $\{\mathcal{S}_{h},\mathcal{S}_{\Delta}\}$ defines the sparsity pattern of $\mathscr{H}$. We  assume that the sparsity pattern of $\mathscr{H}$ does not change along the SCF iteration. Then to solve HFB equations self-consistently,  it is sufficient to evaluate
$\{\rho_{pq} \vert (p,q)\in \mathcal{S}_{h}\}$ and $\{\kappa_{pq} \vert (p,q)\in
\mathcal{S}_{\Delta}\}$ during each iteration. This is guaranteed by Proposition \ref{prop:sparse_hfb}. 

\begin{prop}\label{prop:sparse_hfb}
  Let $R$ be the generalized density matrix of $\mathscr{H}_0$ with sparsity pattern $\{\mathcal{S}_{h},\mathcal{S}_{\Delta}\}$, then generically the evaluation of $\mathscr{H}[R]$ only requires the entries of $R$ restricted to the  same sparsity pattern  $\{\mathcal{S}_{h},\mathcal{S}_{\Delta}\}$.
\end{prop}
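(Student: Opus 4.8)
The plan is to read off, from the defining formulas \eqref{eqn:Hamiltonian_HFB} and \eqref{eqn:HFB_delta}, exactly which entries of $\rho$ and $\kappa$ enter the construction of $\mathscr{H}[R]$, and then to show that the index set of those entries is contained in $\{\mathcal{S}_{h},\mathcal{S}_{\Delta}\}$. Since $h[\rho]$ depends only on $\rho$ and $\Delta[\kappa]$ depends only on $\kappa$ (there is no cross term in \eqref{eqn:HFB_Efunctional} for a Hamiltonian of the form \eqref{eqn:manybodyH}), and since the lower blocks $-\overline{h},-\overline{\Delta}$ of $\mathscr{H}$ use the complex conjugates of the same entries, ``evaluating $\mathscr{H}[R]$'' amounts to computing $h_{pq}$ for $(p,q)\in\mathcal{S}_{h}$ and $\Delta_{pq}$ for $(p,q)\in\mathcal{S}_{\Delta}$. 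By \eqref{eqn:Hamiltonian_HFB}, computing $h_{pq}$ uses $\rho_{sr}$ only for those $(r,s)$ with $V_{prqs}\neq 0$; by \eqref{eqn:HFB_delta}, computing $\Delta_{pq}$ uses $\kappa_{rs}$ only for those $(r,s)$ with $V_{pqrs}\neq 0$. So it suffices to show that these two ``reachable'' index sets are contained in $\mathcal{S}_h$ and $\mathcal{S}_\Delta$ respectively.

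The $\kappa$ part is the easier one. Suppose $(p,q)\in\mathcal{S}_{\Delta}$ and $V_{pqrs}\neq 0$, so that $\kappa_{rs}$ is needed. Using $V_{pqrs}=\overline{V}_{rspq}$, the entry $\Delta_{rs}=\tfrac12\sum_{r's'}V_{rsr's'}\kappa_{r's'}$ contains the term $\tfrac12\overline{V}_{pqrs}\kappa_{pq}$ with a nonzero coefficient, so $\Delta_{rs}$ is a genuinely non-constant affine function of the entries of $\kappa$. Since $\kappa$ is a block of $R=f_{\beta}(\mathscr{H}_0)$, which depends real-analytically and non-trivially on $\mathscr{H}_0$, the value $\Delta_{rs}$ vanishes only on a measure-zero subset of the admissible $\mathscr{H}_0$ (equivalently, of the underlying physical parameters); generically $\Delta_{rs}\neq 0$, i.e. $(r,s)\in\mathcal{S}_{\Delta}$. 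For the $\rho$ part one first uses the symmetries of $V$ --- Hermiticity $V_{prqs}=\overline{V}_{qspr}$ together with the antisymmetry within the creation pair and within the annihilation pair in \eqref{eqn:Vsymmmetry} --- to deduce from $V_{prqs}\neq 0$ that $V_{sqrp}\neq 0$. By \eqref{eqn:Hamiltonian_HFB}, $h_{sr}=h^{0}_{sr}-\mu\delta_{sr}+\sum_{bd}V_{sbrd}\rho_{db}$ then contains the term $V_{sqrp}\rho_{pq}$ with a nonzero coefficient, and the same genericity argument yields $h_{sr}\neq 0$, i.e. $(s,r)\in\mathcal{S}_{h}$. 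Hence the $\rho$-entries needed lie in $\mathcal{S}_{h}$ and the $\kappa$-entries needed lie in $\mathcal{S}_{\Delta}$, which is the claim; the spin-reduced quasi-particle Hamiltonians of Section~\ref{sec:symmetry} are treated in exactly the same way, applied block by block.

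The main obstacle is making ``generically'' precise in a way that is both honest and usable. The combinatorial step --- chasing the index permutations allowed by \eqref{eqn:Vsymmmetry} and $V_{pqrs}=\overline{V}_{rspq}$ to locate a surviving term in $h_{sr}$ or $\Delta_{rs}$ --- is routine once the bookkeeping is set up. The delicate point is arguing that this surviving term is not cancelled by the remaining terms of the sum or by $h^{0}_{sr}$: for a single fixed $\mathscr{H}_0$ such an accidental cancellation is possible, so the statement must be phrased as holding away from a lower-dimensional real-analytic subset of the admissible $\mathscr{H}_0$, using that $\rho_{pq}(\mathscr{H}_0)$ and $\kappa_{pq}(\mathscr{H}_0)$ are non-constant real-analytic functions because $f_{\beta}$ is analytic and the prescribed sparsity pattern is nontrivial. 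One should also make explicit the standing assumption that the sparsity pattern of $\mathscr{H}$ is invariant along the SCF iteration: it is this assumption that lets us identify ``$h_{sr}\neq 0$'' with ``$(s,r)\in\mathcal{S}_{h}$'' (and similarly for $\Delta$), rather than merely with membership in some a priori larger pattern.
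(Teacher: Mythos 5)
Your argument is correct and follows essentially the same route as the paper's proof: you chase the symmetries $V_{pqrs}=V_{qpsr}$ and $V_{pqrs}=\overline{V}_{rspq}$ to show that whenever $\rho_{sr}$ (resp.\ $\kappa_{rs}$) enters the evaluation of a nonzero entry of $h$ (resp.\ $\Delta$), the corresponding entry $h_{sr}$ (resp.\ $\Delta_{rs}$) carries a nonzero coefficient in front of that density-matrix entry and is therefore generically nonzero, placing the index pair inside the assumed sparsity pattern. The paper states this as a proof by contradiction and leaves ``generically'' informal, whereas you argue in the contrapositive and add the (welcome, but not required) remarks on real-analyticity and on the standing assumption that the sparsity pattern is fixed along the SCF iteration; these are presentational differences only.
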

\begin{proof}
Assume the statement is not true. Since $h$ only depends only on $\rho$, and $\Delta$ only depends on $\kappa$, respectively, we first consider the evaluation of $h[\rho]$. Then there exists a pair of indices
$(k,l)\notin \mathcal{S}_{h}$ such that $\rho_{kl}$ is needed to evaluate some nonzero entry $h_{ij}$. From Eq.~\eqref{eqn:Hamiltonian_HFB}, this
requires $V_{ikjl}\ne 0$.  From the symmetry property of $V$, we have
$V_{kilj}=V_{ikjl}\ne 0$. Since $\rho$ is generically a dense matrix, the sum $\sum_{ij} V_{kilj}\rho_{ji}$ is generically nonzero, and hence
$h_{kl}$ is generically nonzero. This contradicts the assumption that $(k,l)\notin
\mathcal{S}_{h}$. Similarly for $\kappa$,  if $(k,l)\notin
\mathcal{S}_{\Delta}$ such that $\kappa_{kl}$ is needed in order
to evaluate some nonzero entry $\Delta_{ij}$, then by
Eq.~\eqref{eqn:HFB_delta} we have $V_{ijkl}\ne 0$. This means
$V_{klij}=V_{ijkl}\ne 0$. Then $\sum_{kl} V_{klij} \kappa_{ij}$ and therefore
$\Delta_{kl}$ are generically nonzero, which is a contradiction.  
\end{proof}

Proposition~\ref{prop:sparse_hfb} allows us to use the pole expansion and
selected inversion method (PEXSI) ~\cite{LinLuYingEtAl2009,LinChenYangEtAl2013}
to solve HFB efficiently for large systems. PEXSI is a reduced scaling
algorithm,  its computational cost is at most $\Or(N_b^2)$, and is uniformly applicable to gapped and gapless systems.  Consider a gapped system for simplicity, and we would like to 
choose a contour $\mathcal{C}$ encircling only the negative part of
the spectrum of $\mathscr{H}$.  Then by Cauchy's contour integral formula 
\begin{equation}
  R = \frac{1}{2\pi \I} \oint_{\mathcal{C}} (z - \mathscr{H})^{-1} \ud z.
  \label{}
\end{equation}
After proper discretization of $\mathcal{C}$ using a quadrature, we have
\begin{equation}
  R \approx \sum_{l=1}^{N_\text{poles}} \omega_{l} (z_{l}-\mathscr{H})^{-1} :=
  \sum_{l=1}^{N_\text{poles}} \omega_{l} \mathscr{G}_{l}.
  \label{eqn:R_discretize}
\end{equation}
Here $\mathscr{G}_{l}$ is a quasi-particle Green's function, and can be computed via a matrix inversion. More generally, when finite temperature is under consideration (for gapped and gapless systems), the contour $\mathcal{C}$ can be chosen to be a dumbbell shaped contour encircling the entire spectrum of $\mathscr{H}$, while avoiding the poles of the Fermi-Dirac function. These poles are defined as $x_k=(2k+1)\I\pi/\beta, k\in\ZZ$, which are called the Matsubara poles~\cite{NegeleOrland1988}. After discretizing the contour, the pole expansion still takes the form~\eqref{eqn:R_discretize}.  Fig.~\ref{fig:contour} (a) and (b) illustrate schematically the choice of the contour and its discretization with and without a gap. The discretization points (poles) $\{z_l\}$ can be chosen to be distributed symmetrically with respect to the real axis, which allows us to only evaluate half of the quasi-particle Green's functions.  
This step can be performed using any rational approximation for the Fermi-Dirac function. In particular, the pole expansion in~\cite{LinLuYingE2009}  only scales as $\log(\beta\Delta_E/\epsilon)$, where $\Delta_{E}$ is the spectral radius of $\mathscr{H}$, and $\epsilon$ is the target accuracy. The pre-constant of the logarithmic scaling factor can be further optimized to be numerically near-optimal~\cite{Moussa2016} for approximating Fermi-Dirac functions. 

\begin{figure}[h]
  \begin{center}
    \subfloat[]{\includegraphics[width=0.35\textwidth]{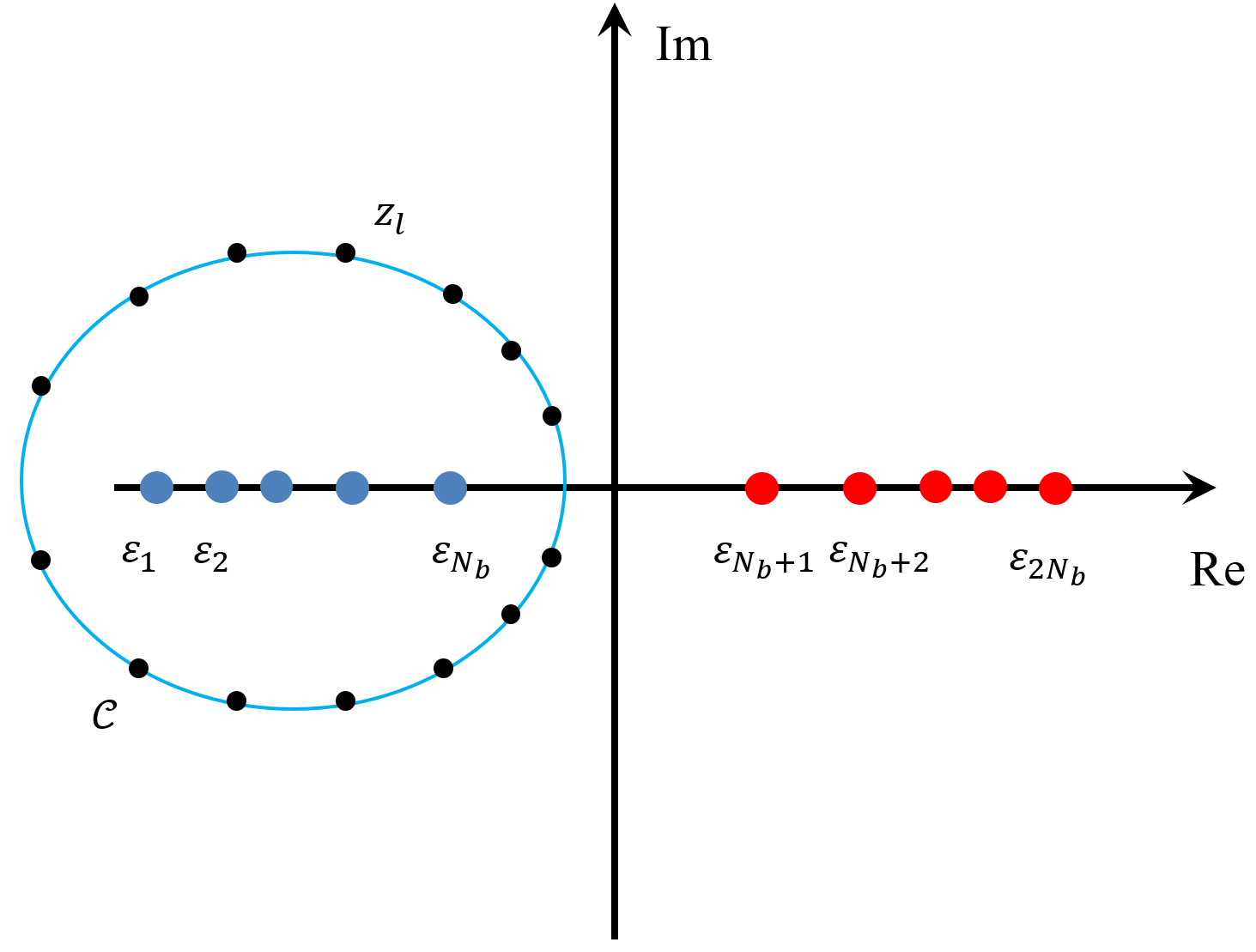}}
    \quad
    \subfloat[]{\includegraphics[width=0.35\textwidth]{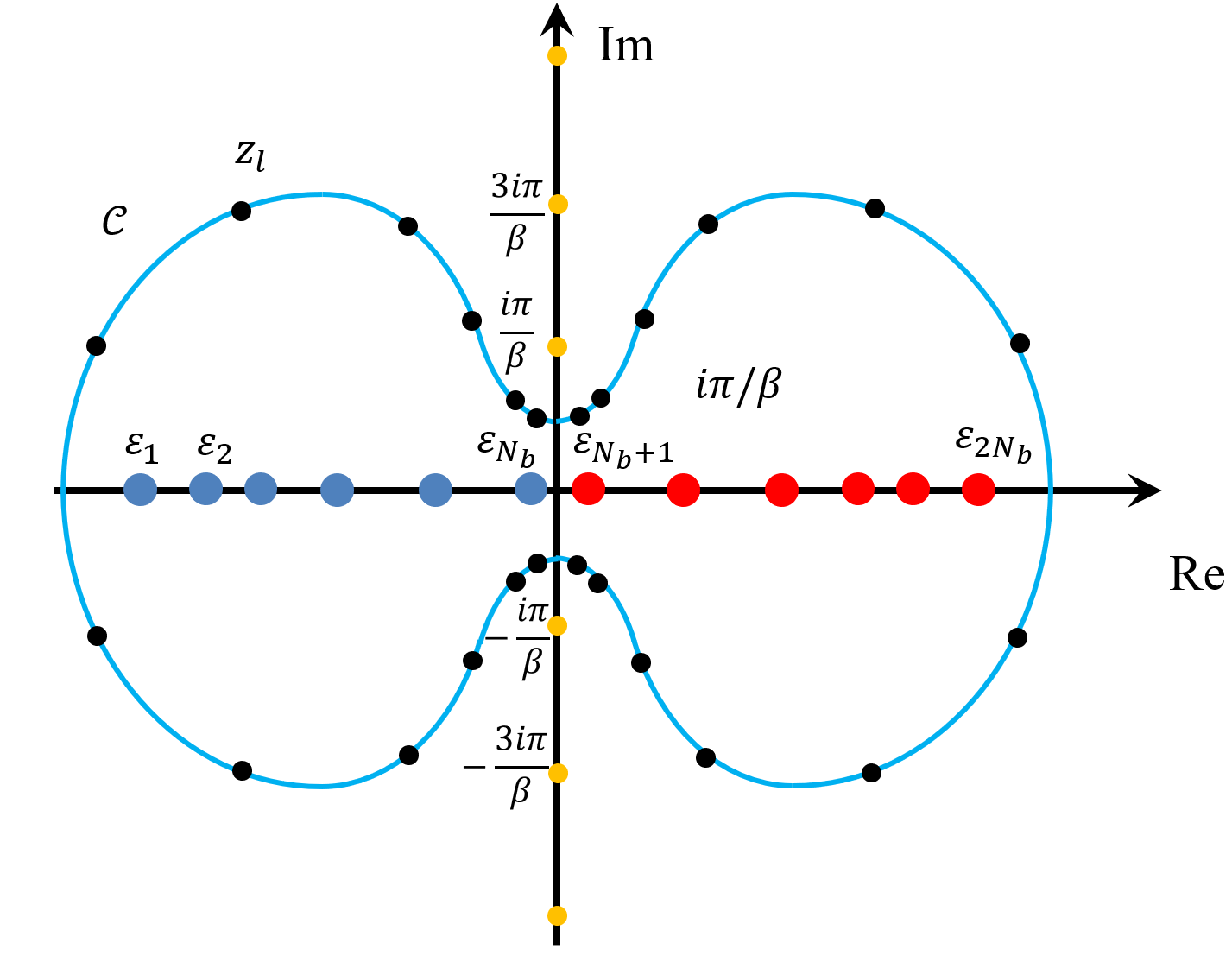}}
  \end{center}
  \caption{Schematic illustration of the contour and its discretization for solving BCS when the quasi-particle Hamiltonian (a) has a relatively large gap; (b) has a small gap or no gap. The yellow points along the imaginary axis indicate the Matsubara poles. The eigenvalues of the quasi-Hamiltonian are always symmetric with respect to the origin.}
  \label{fig:contour}
\end{figure}

Assuming $\{z_l,\omega_l\}$ are given, according to Proposition \ref{prop:sparse_hfb}, we need to evaluate the matrix entries of $R$ corresponding to the sparsity pattern $\{\mathcal{S}_{h},\mathcal{S}_{\Delta}\}$, called the \textit{selected elements} of $R$. This can be performed efficiently using the selected inversion method~\cite{TakahashiFaganChin1973,ErismanTinney1975,LinLuYingEtAl2009,LinYangMezaEtAl2011,KuzminLuisierSchenk2013,JacquelinLinYang2016,JacquelinLinYang2018}, as stated in Proposition \ref{prop:selinv}.

\begin{prop}[Erisman and Tinney~\cite{ErismanTinney1975}]
\label{prop:selinv}
For a matrix $A\in\mathbb{C}^{N\times N}$, let $A=LU$ be its
$LU$ factorization, and $L,U$ are invertible matrices. For any
$1\le k < N$, define
\begin{align}
    \CS_{L} =  \{i\vert L_{i,k}\ne 0\}, \quad \CS_{U} =  \{j\vert U_{k,j} \ne 0\}.
    \label{}
\end{align}
Then all entries $\{A^{-1}_{i,k} \vert i\in \CS_{U}\}$,  
$\{A^{-1}_{k,j} \vert j\in \CS_{L}\}$, and $A^{-1}_{k,k}$ can be
computed using only $\{L_{j,k}| j\in \CS_{L}\}$, $\{U_{k,i}| i\in \CS_U\}$ and $\{A^{-1}_{i,j} \vert (L+U)_{j,i} \ne 0,
i, j\ge k\}$. 
\end{prop}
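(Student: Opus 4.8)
The plan is to prove this by backward substitution on the $LU$ factorization, exploiting the identities $A^{-1}L = U^{-1}$ and $UA^{-1} = L^{-1}$, together with the key structural fact that the sparsity patterns of $L$ and $U$ are nested: the column support $\CS_L$ at step $k$ is contained in the union of column supports at later steps, and similarly for $U$. This is the standard observation behind the selected inversion algorithm, and it is what makes the recursion over $k = N, N-1, \ldots, 1$ close.

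First I would write the block form of the factorization. Normalize so that $L$ is unit lower triangular (the general case differs only by a diagonal scaling that does not affect supports). For a fixed $k$, partition indices into $\{1,\ldots,k-1\}$, $\{k\}$, and $\{k+1,\ldots,N\}$, and write $A^{-1}$ in the corresponding $3\times 3$ block form. From $UA^{-1} = L^{-1}$, restricted to row $k$, one gets the relation $U_{k,k} A^{-1}_{k,j} + \sum_{i>k} U_{k,i} A^{-1}_{i,j} = (L^{-1})_{k,j}$; since $L^{-1}$ is lower triangular, the right-hand side vanishes for $j > k$ and equals $1$ for $j = k$. Solving for $A^{-1}_{k,j}$ expresses it in terms of $U_{k,k}$, the entries $\{U_{k,i} : i \in \CS_U\}$, and the "later" block entries $\{A^{-1}_{i,j} : i \in \CS_U, i > k\}$ (with $j = k$ one also uses $A^{-1}_{i,k}$ for $i \in \CS_U$). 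The analogous manipulation of $A^{-1}L = U^{-1}$ restricted to column $k$ gives $A^{-1}_{i,k}$ for $i \in \CS_U$ in terms of $\{L_{j,k} : j \in \CS_L\}$ and $\{A^{-1}_{i,j} : j \in \CS_L, j > k\}$, and then $A^{-1}_{k,k}$ follows from the row identity with $j = k$ once the off-diagonal pieces are known. The only subtlety is that the "later" entries invoked are themselves selected entries of $A^{-1}$ relative to the combined pattern of $L + U$ — one must check that $i \in \CS_U$ forces $(L+U)_{i,k}$ or $(L+U)_{k,i}$ to be in that pattern, which is exactly the fill-in containment property of $LU$ factorization: if $U_{k,i} \ne 0$ then the $(i,k)$ entry lies in the filled pattern, so $A^{-1}_{i,k}$ is a legitimate selected element.

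The main obstacle — really the only one requiring care — is the bookkeeping that makes the recursion self-consistent: one must verify that every entry of $A^{-1}$ appearing on the right-hand side of the two solved relations is again of the form $A^{-1}_{i,j}$ with $(L+U)_{j,i} \ne 0$ and $i,j \ge k$, so that the set of "selected elements of $A^{-1}$" is closed under the backward sweep and the induction hypothesis applies at step $k+1$. This follows from the standard fact that the union of the column support of $L$ and the row support of $U$ is itself a "filled" pattern closed under the symbolic elimination, but stating it cleanly is the substantive part; the algebraic identities themselves are routine. I would finish by noting that the claim as stated is just the single-step version of this, so in fact no induction is needed for the proposition itself — one only needs the two block identities above and the observation that the required "later" entries are precisely $\{A^{-1}_{i,j} : (L+U)_{j,i} \ne 0,\ i,j \ge k\}$ — and I would present the proof in that self-contained single-step form, citing Erisman and Tinney for the original argument.
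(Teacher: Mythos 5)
The paper does not actually prove Proposition~\ref{prop:selinv}; it is quoted verbatim from Erisman and Tinney and used as a black box, so there is no in-paper argument to compare against. Your sketch is the standard (and correct) derivation: the two identities $UA^{-1}=L^{-1}$ and $A^{-1}L=U^{-1}$, restricted to row $k$ and column $k$ respectively, yield
\[
A^{-1}_{k,j}=\frac{1}{U_{k,k}}\Bigl[(L^{-1})_{k,j}-\sum_{i>k,\,i\in\CS_U}U_{k,i}A^{-1}_{i,j}\Bigr],\qquad
A^{-1}_{i,k}=-\frac{1}{L_{k,k}}\sum_{j>k,\,j\in\CS_L}A^{-1}_{i,j}L_{j,k}\ \ (i>k),
\]
and $A^{-1}_{k,k}$ from the first relation at $j=k$, which is exactly the single-step claim. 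Two points deserve tightening. First, your opening appeal to ``nested column supports'' is not the property you end up needing; the relevant fact is that $L_{j,k}\ne 0$ and $U_{k,i}\ne 0$ with $i,j>k$ force $(L+U)_{j,i}\ne 0$, because position $(j,i)$ receives the update $-L_{j,k}U_{k,i}$ when row/column $k$ is eliminated. Second, that implication holds only for the \emph{structural} (symbolic) factorization pattern, i.e.\ barring exact numerical cancellation; this is the same ``generically'' caveat the paper itself invokes in Proposition~\ref{prop:sparse_hfb}, and it should be made an explicit hypothesis if the proof is written out. With those two adjustments your single-step, induction-free presentation is a faithful and self-contained proof of the statement.
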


Therefore in order to evaluate the selected elements of $R$, we only need to evaluate the selected elements of $\mathscr{G}_l$ according to the sparsity pattern of the $LU$ factorization of $\mathscr{H}$. This can lead to significant savings in terms of the computational cost. For a $d$-dimensional lattice system with nearest neighbor interaction (satisfied by Hubbard-type Hamiltonians), the cost for evaluating the selected elements of $R$ scales as $\Or\left(N_{b}^{\max\{3(d-1)/d,1\}}\right)$\cite{LinLuYingEtAl2009,LinYangMezaEtAl2011}. In particular, the cost for $d=2$ and $d=3$ is $\Or(N_b^{1.5})$ and $\Or(N_{b}^2)$, respectively (and we always consider $d\le 3$).

When using the PEXSI method, we should in fact distinguish the cases when $\mathscr{H}$ is real symmetric or Hermitian. This is because in the real symmetric case, $z_l-\mathscr{H}$ is a complex symmetric matrix. Then its $LU$ factorization can be simplified using the complex symmetric $LDL^{\top}$ factorization, and correspondingly the symmetric version of selected inversion can be used\cite{JacquelinLinYang2016}. When $\mathscr{H}$ is Hermitian, the matrix $z_l-\mathscr{H}$ is neither Hermitian nor complex symmetric, but only a structurally symmetric matrix. In such a case, the general $LU$ factorization should be used (e.g. using SuperLU\_DIST\cite{LiDemmel2003}), and correspondingly the selected inversion method for general asymmetric matrices should be used \cite{JacquelinLinYang2018}.  When $\mathscr{H}$ can be block diagonalized, e.g. in the case of spin-singlet and spin-triplet couplings, we may apply PEXSI to each of the sub-matrices of size $N_b\times N_b$, and the overall density matrix and pair matrix can be constructed accordingly.
This makes PEXSI more favorable compared to diagonalization methods for large systems. 
A pseudocode summarizing the solution of HFB equations is described in Algorithm 


\ref{alg:hfb0}.\begin{algorithm}[!htp]
  \caption{Solve a general Hartree-Fock-Bogoliubov problem \label{alg:hfb0}}
  \begin{algorithmic}
    \State \textbf{Input}: $h^0$, $V$, $\rho_0$, $\kappa_0$, $\epsilon$, $\ell=0$
    \State \textbf{Output}: $\rho$, $\kappa$, $E$
    \State Compute  $h,\Delta$ by Eq. \eqref{eqn:Hamiltonian_HFB},\eqref{eqn:HFB_delta}, and form  $\mathscr{H}$. 
    \While {$\delta E < \epsilon$ }
    \While {True}
    \State Compute the generalized density matrix $R$ by diagonalization or  PEXSI.
    \State Update chemical potential $\mu$ in $h$ (using e.g. Newton's method or bisection).
    \If{$|\Tr[\rho_\uparrow]+\Tr[\rho_\downarrow] - N|<\epsilon$}
    \State Break
    \EndIf
    \EndWhile
    \State Compute the total energy $E^{(\ell+1)}$ by \eqref{eqn:HFB_Efunctional}.
    \State Compute  $h,\Delta$ by Eq. \eqref{eqn:Hamiltonian_HFB},\eqref{eqn:HFB_delta}, which gives  $\mathscr{H}[R]$
    \State Update  $\mathscr{H}$ using mixing methods.
    \State Compute the energy difference $\delta E = |E^{(\ell+1)}-E^{(\ell)}|$. 
    \State $\ell\gets \ell+1$ 
    \EndWhile
  \end{algorithmic}
\end{algorithm}



For finite temperature calculations, we may also need to evaluate the Helmholtz free energy at the end of the calculation.  This can be performed using PEXSI as well. First we reformulate the entropy as 
\begin{equation}
  \begin{split}
  S =& \sum_{i} \left\{f_{i} \log (1+e^{\beta \varepsilon_{i}}) + (1-f_{i})\log
  (1+e^{-\beta \varepsilon_{i}})\right\}\\
  = & \sum_{i} \left\{f_{i} \beta \varepsilon_{i} + f_{i} \log (1+e^{-\beta
  \varepsilon_{i}}) + (1-f_{i})\log (1+e^{-\beta \varepsilon_{i}})\right\} \\
  = & \beta \sum_{i} f_{i} \varepsilon_{i} + \sum_{i}\log (1+e^{-\beta \varepsilon_{i}}).
  \end{split}
  \label{}
\end{equation}
Since $E:=\sum_{i} f_{i} E_{i}$ is the total energy at finite temperature, the Helmholtz free energy $\mathcal{F}$ can be expressed as a matrix function of $\mathscr{H}$ as
\begin{equation}
  \mathcal{F} = E - \frac{1}{\beta} S = -\frac{1}{\beta} \sum_{i}\log (1+e^{-\beta
  \varepsilon_i}) = -\frac{1}{\beta} \Tr\log (1+e^{-\beta \mathscr{H}}).
  \label{eqn:helmholtz}
\end{equation}
The function $\log (1+e^{-\beta z})$ is analytic in the complex plane, other than at the Matsubara poles. 
Therefore using the contour integral formulation, 
one can use the same poles as those used for
computing the charge density to evaluate the Helmholtz free energy, but with different weights. We refer readers to ~\cite{LinChenYangEtAl2013,LinGarciaHuhsEtAl2014} for more details.

\section{Numerical experiments}\label{sec:numer}
In this section, we use the 2D Hubbard-Hofstadter model to illustrate the performance of the PEXSI solver for performing large scale HFB calculations. The Hubbard-Hofstadter model extends the classical Hubbard model in the presence of a perpendicular magnetic field. The attractive Hubbard-Hofstadter  model is one of the simplest models exhibiting a nontrivial pairing order, and has a rich ground state phase diagram \cite{wang2014topological, Iskin2015}. The model can be realized in cold atom experiments, which has been used to verify the numerical results \cite{aidelsburger2013realization,kennedy2013spin,cocks2012time}.  

\subsection{Model setup}\label{sec:hhmodel}

Consider a rectangular box containing a lattice of size $N_x\times N_y$. In the Hubbard-Hofstadter model, the non-interacting part of the Hamiltonian is written as
\begin{equation}
  \hat H_0 = -\sum_{i=1}^{N_x}\sum_{j=1}^{N_y}\sum_{\sigma\in\{\uparrow,\downarrow\}}\left[t_x \hat{a}_{i,j,\sigma}^\dagger \hat{a}_{i+1,j,\sigma} + t_y^\sigma(i)\hat{a}_{i,j,\sigma}^\dagger \hat{a}_{i,j+1,\sigma}\right] + \text{h.c.}.
\end{equation}
Here $(i,j)$ is the label of the sites in the  lattice. We set the hopping along the $x$-direction to be $t_x=1$. Here $i+1,j+1$ should be interpreted in the sense of periodic boundary conditions. The hopping along the $y$-direction  depends on both the spatial and spin indices as 
\begin{equation}
  t_y^\sigma (\ell) = e^{\I s_\sigma 2\pi\alpha \ell}.
\end{equation}
Here $s_\uparrow=1, s_\downarrow=-1$. The parameter $\alpha=p/q$ is a rational number, and $p$ and $q$ are coprimes, and $\alpha$ encodes the strength of the magnetic field through the Peierls substitution \cite{LandauLifshitz1991}. The hopping satisfies $t_y^\uparrow(\ell) = \overline{t_y^\downarrow(\ell)}$. The energy spectrum of $\hat H_0$ plotted against $\alpha$ exhibits the celebrated pattern of the Hofstadter butterfly \cite{Hofstadter1976}. 

Here we first consider the a relatively large $\alpha$ ($\alpha=1/3$ and $1/2$) to demonstrate the accuracy of the method, and will consider smaller values of $\alpha$ (i.e. large supercells) later. The interacting part only includes the on-site attractive interaction as
\begin{equation}
  \hat H_1 = -U\sum_{i=1}^{N_y}\sum_{j=1}^{N_y}\hat{n}_{i,j,\uparrow}\hat{n}_{i,j,\downarrow}.
\end{equation}
The total Hamiltonian is then 
\begin{equation}
\hat H = \hat H_0 + \hat H_1.
\end{equation}
Using the matrix notation, the non-interacting part can be rewritten as
\begin{equation}
  \label{eq:16}
  \begin{aligned}
    h_{\uparrow}^0 = t_xI_{N_x}\otimes h_{N_y\times N_y} + h_{N_x\times N_x}\otimes t_y^\uparrow\\
    h_{\downarrow}^0 = t_xI_{N_x}\otimes h_{N_y\times N_y} + h_{N_x\times N_x}\otimes t_y^\downarrow\\
  \end{aligned}
\end{equation}
where
\begin{equation}
  \label{eq:20}
  h_{N\times N} =
  \begin{pmatrix}
    0 & -1 &  & & -1 \\
    -1 & 0 & -1 &\\
     & -1 & \ddots & \ddots \\
     &  & \ddots & 0 & -1\\
    -1 & &  & -1 &0
  \end{pmatrix}_{N\times N}.
\end{equation}
Hence $h_\uparrow^0=\overline{h_\downarrow^0}$.
With some abuse of the notation, we use $i,j$ as general site indices below. The spatial-spin index can be identified with the spin-orbital index according to )
$$p\leftarrow (i,\sigma_1), \quad  q \leftarrow (j,\sigma_2), \quad  r \leftarrow (k,\sigma_3), \quad  s\leftarrow (l, \sigma_4).
$$ 
The interacting term is a highly sparse matrix and can be constructed from antisymmetrizing 
\begin{equation}
  \widetilde{V}_{ijkl, \sigma_1\sigma_2\sigma_3\sigma_4}=
  \left\{
    \begin{array}{cc}
      -U, & i=j=k=l, \sigma_1=\sigma_3=\uparrow, \sigma_2=\sigma_4=\downarrow,\\
      0, & \text{otherwise}
    \end{array}
  \right.
\end{equation}

We consider study the singlet coupling in the Hubbard-Hofstadter model. The HFB energy functional can be simplified as
\begin{equation}
  \label{eq:13}
  \mathcal{E}^\text{HFB}[\rho, \kappa] := \Tr(h_\uparrow^0\rho_\uparrow +h_\downarrow^0\rho_\downarrow) -\mu\Tr(\rho_\uparrow+\rho_\downarrow) - U\sum_{i}\rho_{ii,\uparrow}\rho_{ii,\downarrow} + U\sum_{i}\kappa_{ii,\uparrow\downarrow}\overline{\kappa}_{ii, \uparrow\downarrow}
\end{equation}
According to Eq. \eqref{eqn:Hamiltonian_HFB} and \eqref{eqn:HFB_delta}, the corresponding matrix blocks of the quasi-particle Hamiltonian are expressed as
\begin{equation}
  \label{eq:17}
  h[\rho] =
  \begin{pmatrix}
    h^0_\uparrow-\mu I_{\widetilde N_b}-U\text{diag}(\rho_\downarrow)& 0\\ 
    0 & h^0_\downarrow-\mu I_{\widetilde N_b}-U\text{diag}(\rho_\uparrow)\\ 
  \end{pmatrix}
\end{equation}
and
\begin{equation}
  \Delta[\kappa] = 
  \begin{pmatrix}
    0 & -U\text{diag}(\kappa_{\uparrow\downarrow})\\
    -U\text{diag}(\kappa_{\downarrow\uparrow}) & 0\\
  \end{pmatrix}
\end{equation}
This is a spin-singlet Hamiltonian, and $\mathscr{H}$\ only depends on the diagonals of the density matrix and the pair matrix, respectively. Since $\Delta$ is a block-off-diagonal matrix, $\mathscr{H}$ can be equivalently reduced to the $N_b\times N_b$ system
\begin{equation}
  \mathscr{H}_{\uparrow\downarrow}[\rho, \kappa]
  =\begin{pmatrix} h_{\uparrow}^0 -\mu I_{\widetilde N_b} - U\text{diag}(\rho_\downarrow) &
    -U\text{diag}(\kappa_{\uparrow\downarrow})\\
    -U\text{diag}(\kappa_{\uparrow\downarrow}^{\dagger}) & - \overline{h^{0}_{\downarrow}} + \mu I_{\widetilde N_b} +  U\text{diag}(\rho^{\top}_{\uparrow})
  \end{pmatrix}
\end{equation}
After the solution, we define the following mean pairing potential (with some abuse of notation, denoted by $\overline{\Delta}$ when the context is clear) as the order parameter to characterize the superconducting phase. 
$$
\overline{\Delta}=\frac{U}{\widetilde{N}_b}\sum_{i} \kappa_{ii,\uparrow\downarrow}.
$$

\subsection{Computational details}
Although the periodic boundary condition of the Hubbard-Hofstadter model along the $x$ direction allows us to consider a quasi-1D domain with $N_x\ll N_y$, in order to demonstrate the numerical performance of the method, in this paper we always consider a square lattice with $N_x=N_y=\sqrt{\widetilde{N}_b}$. The ground state under the canonical ensemble (NVT, i.e. the average number of electrons $N$ is fixed) and grand canonical ensemble ($\mu$VT, i.e. the chemical potential $\mu$ is fixed) are investigated separately in the following context. In all calculations, we use a small finite temperature $0.00095$ (corresponding to 300K in the atomic unit). All experiments are performed on the Cori Haswell supercomputer at NERSC, and each node has 32 cores (2 Intel Xeon Processor E5-2698 v3) and 128 GB DDR4 2133 MHz memory.

{\bf Solvers}. We compare the PEXSI solver\footnote{\url{http://www.pexsi.org}} with the  dense eigensolver \texttt{pzheevd} in ScaLAPACK \cite{ScaLAPACK}, and other eigensolvers such as ELPA \cite{Marek_2014} 
can be used as well. SuperLU\_DIST \cite{LiDemmel2003} is used for performing the sparse $LU$ factorization, and ParMETIS \cite{KarypisKumar1998,KarypisKumar1998a} is used to reorder the matrix. The default choice of the number of poles $N_\text{poles}$ is $60$, using the contour integral formulation in \cite{LinLuYingE2009}. For each pole, the nonzero entries of  the quasi-Hamiltonian matrix are distributed across a two-dimensional processor grid of size $N_r\times N_c$. Therefore, PEXSI employs  $N_\text{poles}\times N_{r} \times N_c$ cores in total.  
$N_\text{poles}=60$ is used in most experiments. As a fair comparison, in the dense solver, all the entries of the quasi-Hamiltonian are distributed in  $8N_r\times 8N_c$ processors with the block cyclic data distribution. 

{\bf Self-consistency}. We set the initial density matrix to be zero and the initial pair matrix to be an identity matrix. The self-consistent equation is solved using the  direct inversion of the iterative subspace (DIIS) method \cite{Pulay1980,Pulay1982}, which extrapolates the previous up to 7 steps of the quasi-Hamiltonian. 
  Our convergence criterion is set to be the relative energy difference between two steps (in our calculations $E^{(\ell)}$ always has a non-vanishing amplitude)\ 
\begin{equation}
  \label{eq:24}
  \frac{|E^{(\ell)}-E^{(\ell-1)}|}{|E^{(\ell)}|} < 10^{-9}.
\end{equation}

{\bf Chemical potential}. In the canonical ensemble, the chemical potential is adjusted to meet the requirement of the average electron of numbers in each step of the self-consistent iterations. Consider the function $\nu(\mu) =\Tr[\rho_\uparrow(\mu)] + \Tr[\rho_\downarrow(\mu)]$. Unlike the original HF model where the function $\nu(\mu)$ is a piecewise constant function, this function in the HFB model is most likely a smooth function as shown in Fig. \ref{fig:chemical_potential}.
\begin{figure}[!htb]
  \centering
  \includegraphics[scale=0.4]{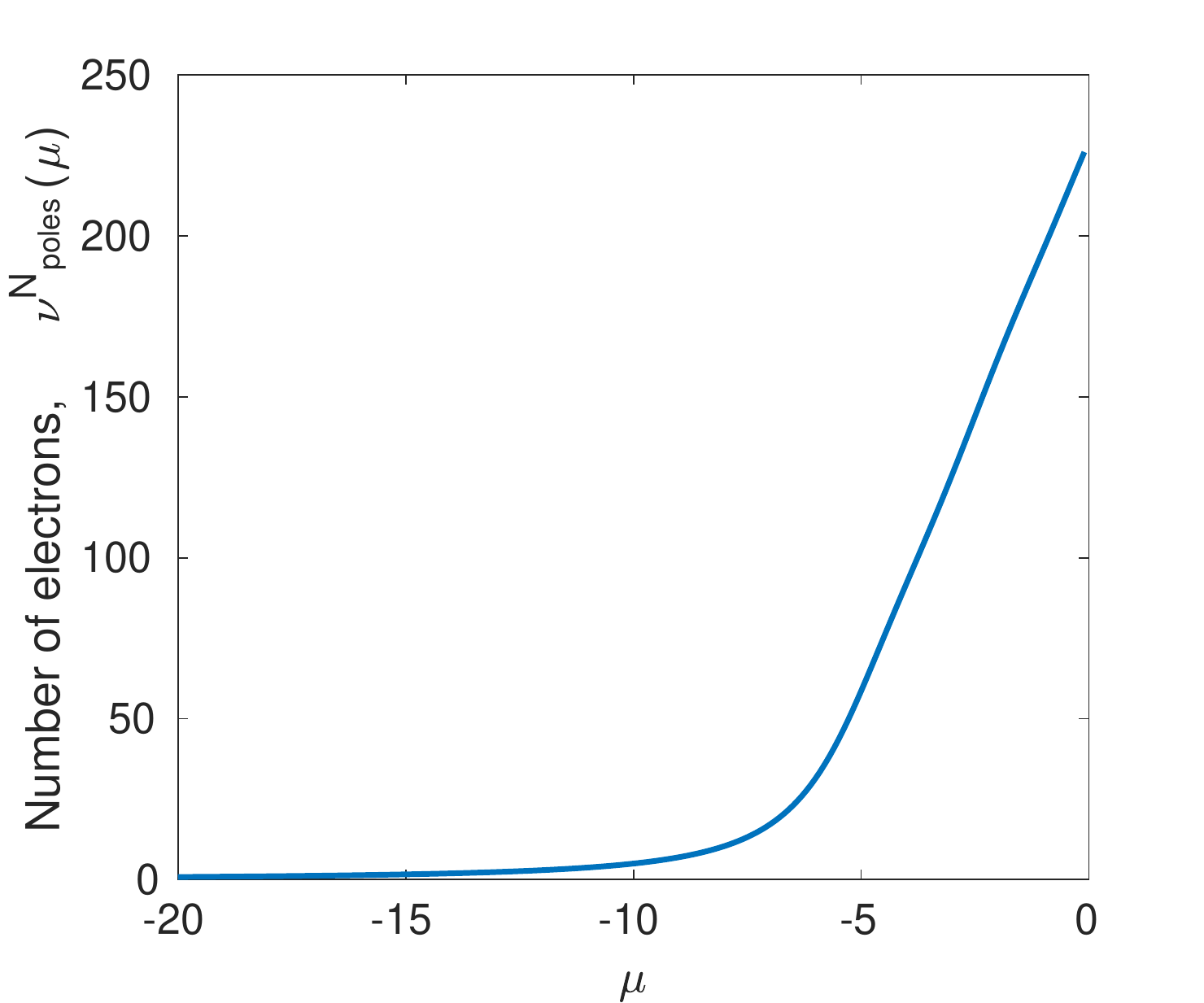}
  \caption{Number of electrons as a function of the chemical potential at self-consistency. $N_x=12$, $N_y=12$, $U=5.0$, $\alpha=1/3$.}
  \label{fig:chemical_potential}
\end{figure}
  
Here we use Newton's method to solve the scalar equation
\begin{equation}
  \label{eq:23}
  \nu(\mu) = N
\end{equation}
within each step of the self-consistent field iteration. The initial guess is set to be zero. Newton's method may fail when $\nu(\mu)$ is not continuously differentiable or the initial guess is not close to the exact solution enough. When the Newton's method fails, the expensive bisection method can be applied  (though we have not observed such cases in our experiments).

To reduce its iteration number, we employed the chemical potential from the last step as the initial guess after the first iteration. The convergence criterion is $(\nu(\mu) - N)/N < 10^{-13}$.  One can also use a looser criterion during the self-consistent loop to save the computational cost.  




\subsection{Accuracy} The overall error of solving HFB equations by PEXSI is from three aspects: 1) approximation of the pole expansion in PEXSI. This is the main error of solving HFB equations caused by the eigensolver PEXSI. The inaccurate generalized density matrix leads to deviation of chemical potential, and further affects the accuracy of total energy. 2) solving chemical potential to meet the particle number by the combination of Newton's method and bisection method. 
3) self-consistent error. Both 2) and 3) are shared between PEXSI and diagonalization methods, and can be systematically reduced to become negligible. The selected inversion is a numerically exact fast algorithm for evaluating selected elements of Green's functions. Hence, we concern about the error due to the pole expansion only.
We first investigate the accuracy of the function $\nu(\mu)$. In each iteration of the self-consistency loop, $\nu(\mu)$ is evaluated at multiple points to adjust the particle number. The accurate evaluation of $\nu(\mu)$ is crucial for solving the HFB equations. From Fig. \ref{fig:nu_mu_err}, the approximation error uniformly decreases as one increases the number of poles.
\begin{figure}
    \centering
    \includegraphics[scale=0.4]{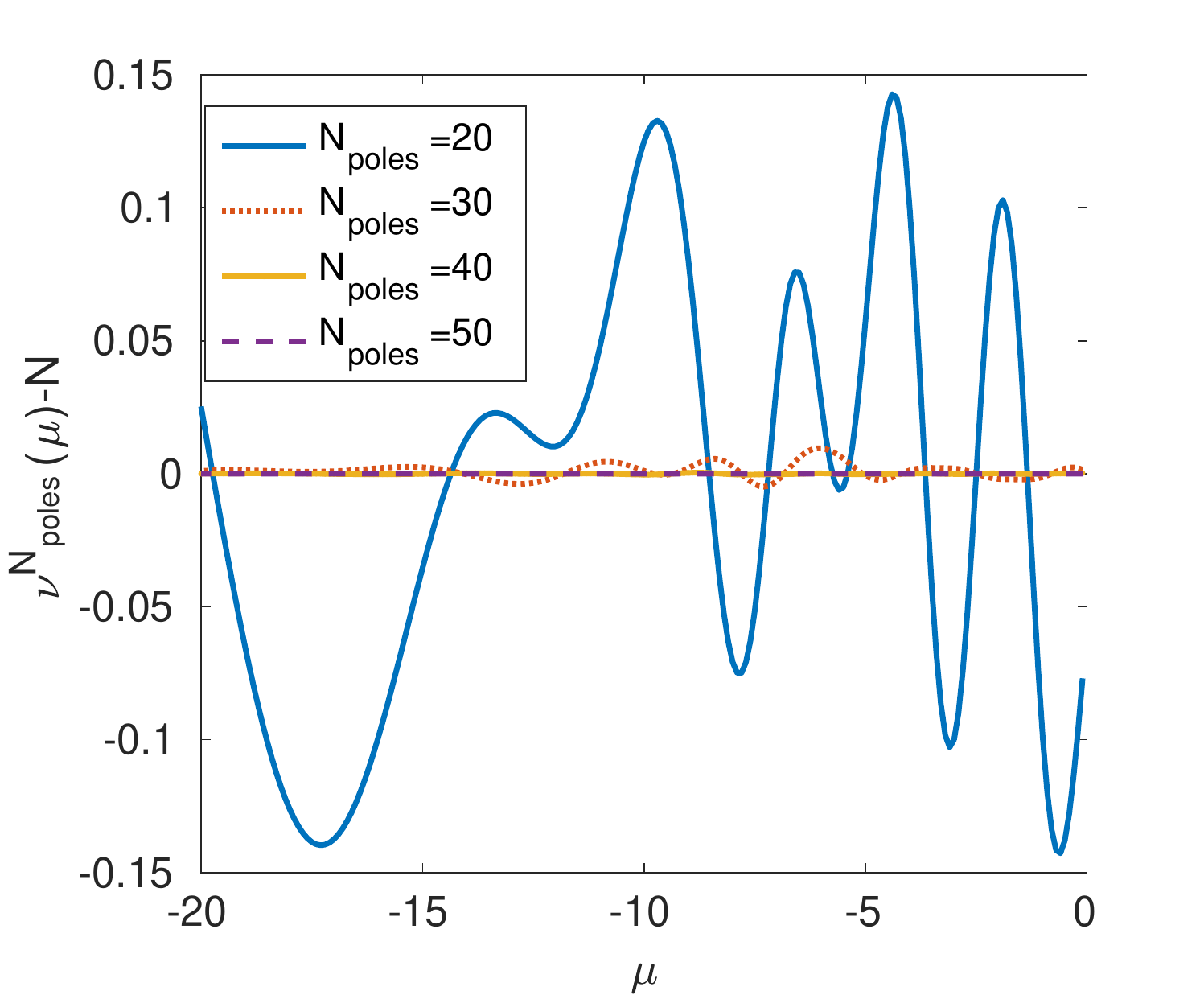}
    \caption{Difference between the function $\nu^\text{exact}(\mu)$ and $\nu^\text{PEXSI}(\mu)$ for $N_\text{poles}$=20, 30, 40, 50. $\nu^\text{exact}(\mu)$: the exact function. $\nu^\text{PEXSI}(\mu)$: the approximated function by PEXSI. $N_x=12$, $N_y=12$, $N=144$, $U=5.0$. 
      The exact values of the function is shown in Fig.\ref{fig:chemical_potential}}
    \label{fig:nu_mu_err}
\end{figure}

We consider the NVT system, and measure the error in terms of the total energy and pairing field. The exact results are obtained by ScaLAPACK. The error of the pairing field is measured in $l_2$ norm. We measure the error with respect to different choices of the number of poles (from $20$ to $80$) and different  system sizes ($24\times 24$ sites to $96\times 96$ sites). With $20$ poles, the energy error and pairing field error are on the order of $10^0$ and $10^{-2}$ , respectively, as shown in Fig. \ref{fig:error}. 
The relative error of energy consistently decreases to $10^{-10}$ as more poles are used for all systems, and the error of pairing field can be reduced to $10^{-9}$. Fig. \ref{fig:error} also shows that the accuracy of PEXSI is not sensitive to the system size. In most calculations below, we set $N_{\text{poles}}=60$.
\begin{figure}[!htbp]
  \centering
  \includegraphics[scale=0.7]{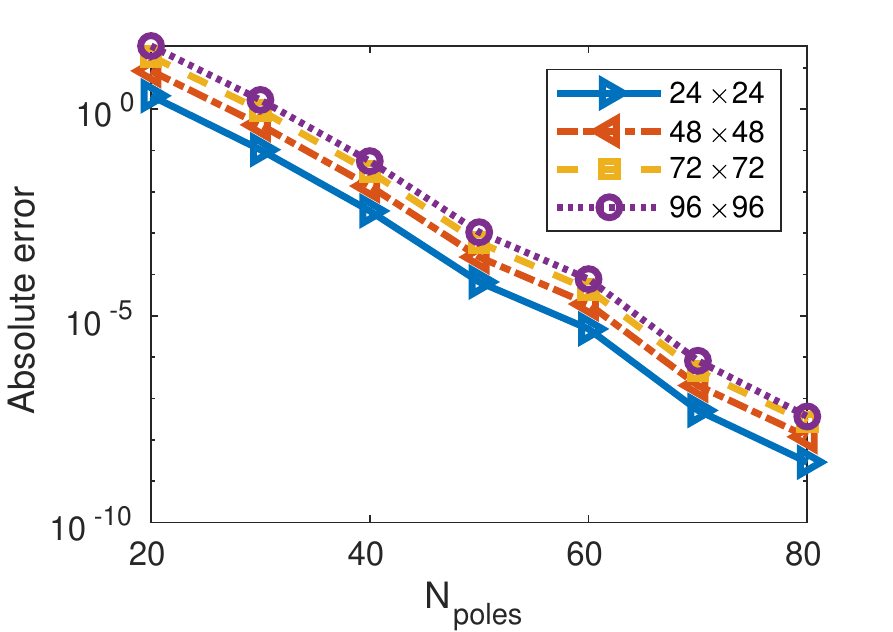}
  \includegraphics[scale=0.7]{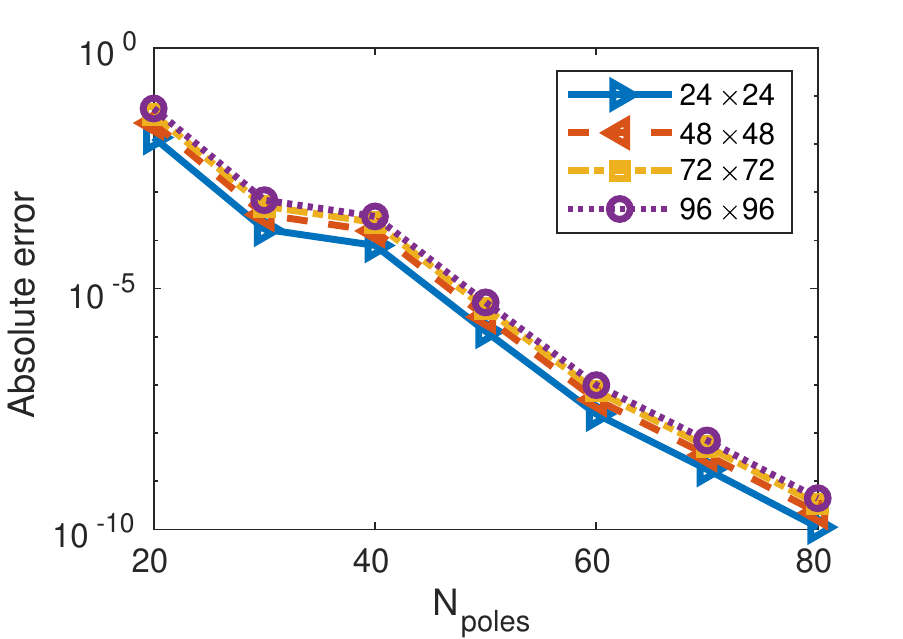}
  \caption{Errors of energy (left) and order parameter (right) by PEXSI with respect to the number of poles. $U=5.0$.}
  \label{fig:error}
\end{figure}

\subsection{Efficiency and scalability} We measure the efficiency of PEXSI in terms of the cost for calculating one step of the self-consistent iteration for $\mu$VT systems.
The efficiency and scalability are investigated by two separated sets of experiments. In the first set, the total number of processors is fixed to be $64$ for ScaLAPACK and $60$ for PEXSI respectively. The exact diagonalization are performed for small systems with $16\times 16$ to $80 \times 80$ sites. For PEXSI, we calculated the systems with $16\times 16$ to $96\times 96$ sites.
The comparison of the wall clock time  of the two methods is presented in Fig. \ref{fig:efficiency}. The cost of exact diagonalization scales asymptotically as $\Or(N^3_{b})$. In our experiments, the scaling is observed to be $\Or(N^{2.94}_{b})$. On the other hand, the asymptotic scaling of the cost of PEXSI is $\Or(N^{1.5}_{b})$ for 2D systems, and the numerically observed scaling is $\Or(N_b^{1.27})$. As shown in Fig. \ref{fig:efficiency}, PEXSI is superior to the exact diagonalization even for a small system. 
The crossover occurs when the system size has around $576$ sites. PEXSI is already about $279$ times faster than the exact diagonalization for a modestly large $80\times 80$ system. 

To demonstrate the strong scaling, we fix the system size as $240\times 240$ for the exact diagonalization, and $1200\times 1200$ for the pole expansion method. The number of processors is raised from $1600$ to $14400$, and from $2304$ to $17280$ for ScaLAPACK and PEXSI, respectively. From Fig.\ref{fig:efficiency}, we observed PEXSI is significantly faster than ScaLAPACK, even though PEXSI is solving for a system that is $25$ times larger. We further quantify the strong scaling by
\begin{equation}
  \label{eq:2}
  \eta(np) = \frac{np0\times WT(np0)}{np\times WT(np)}, \quad  np0 \ll np
\end{equation}
with $WT(np)$ is the wall clock time when $np$ processors are employed. We observed that the parallel efficiency of ScaLAPACK significantly drops down when the number of processors is greater than $10,000$. For example, in ScaLAPACK, $$\eta(14400) \approx \frac{1600\times WT(1600)}{14400\times WT(14400)} = 0.65.$$ While, in PEXSI,$$\eta(17280) \approx \frac{2304\times WT(2304)}{17280\times WT(17280)} = 0.83.$$ Therefore PEXSI has a better parallel efficiency, therefore is more suitable for large scale parallel computation.
\begin{figure}[!htbp]
  \centering
  \includegraphics[scale=0.6]{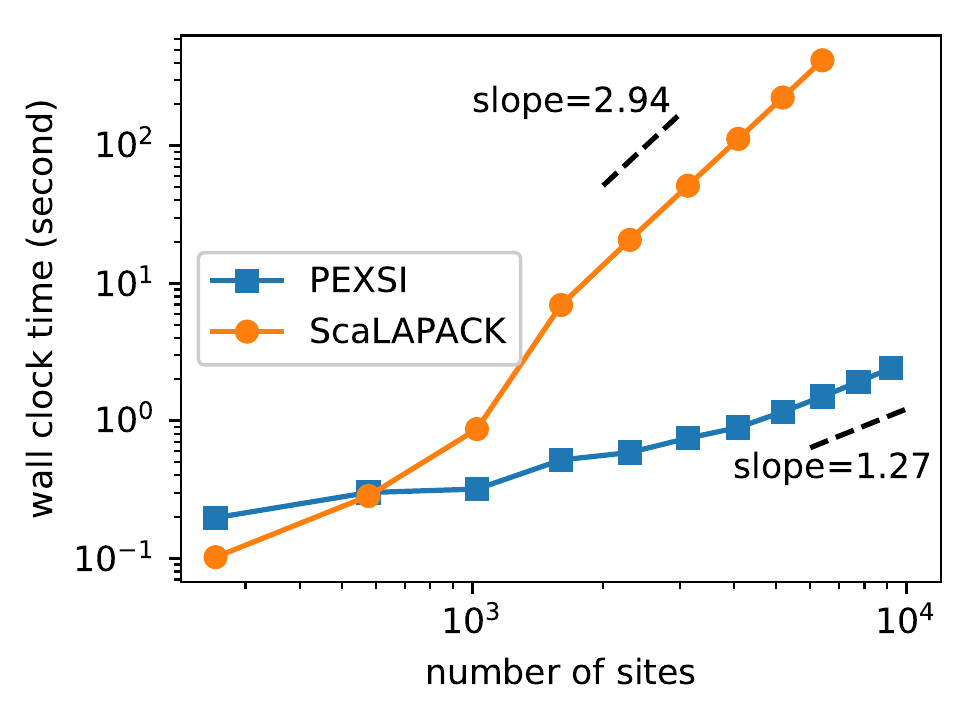}
  \includegraphics[scale=0.6]{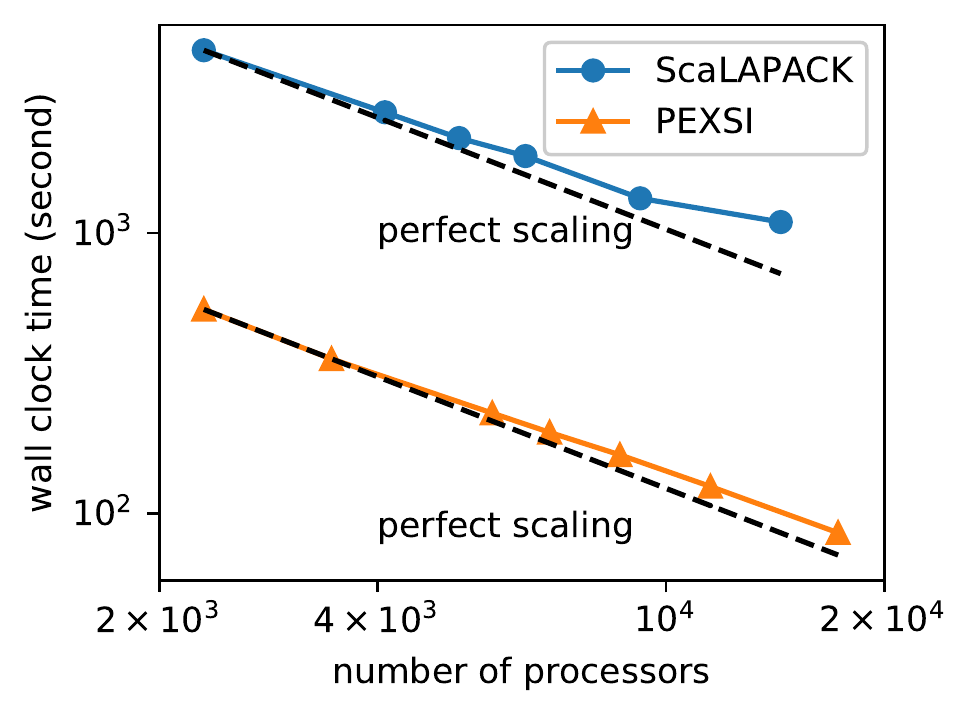}
  \caption{Computational complexity (left) and parallel efficiency (right) of PEXSI and ScaLAPACK for solving HFB equations. Each wall clock time is averaged over $20$ experiments. Left: wall clock time vs number of sites. $\alpha=1/3$, $U=2.0$ and $\mu=-1.0$. $64$ processors are employed by ScaLAPACK, while $60$ processors are employed by PEXSI. The slope for PEXSI is fit using the last $4$ points. The slope of ScaLAPACK is fit using $5$ points. Right: wall clock time vs number of processors. $\alpha=1/3$, $U=2.0$, $\mu=-1.0$. The system size is $240\times 240$ for ScaLAPACK, and $1200\times 1200$ for PEXSI.}
  \label{fig:efficiency}
\end{figure}

\subsection{Phase diagram} We plot the phase diagram of the system with a $180\times 180$ square lattice in terms of the filling factor $n=N/N_b$ and $U$ at $\alpha = 1/3$. 
In terms of the order parameter $\overline{\Delta}$, the mean-field phase diagram of the 2D Hubbard-Hofstadter model has two phases \cite{Iskin2015}: the quantum spin Hall insulator (QSHI) state and the staggered flux (SF) state. We calculated the pairing order parameter of $37 \times 26$ grids in the box with the filling factor $n$ from $2/3$ to $1$, and interaction magnitude $U$ from $1.0$ to $6.0$. The phase diagram is presented in Fig. \ref{fig:phase_diagram}. When $n=1.0$, the system is always in the SF state with a non-vanishing $\overline{\Delta}$. When $n=2/3$, the system can transit from from the QSHI state  to the SF state, and the phase transition occurs at  $U\approx3.0$.  In the cases where $n \in (2/3, 1)$, the order parameter gradually increases with respect to $n$. Our phase diagram result calculated from a finite sized system in real space agrees with that in existing works using both momentum space and real space representations \cite{Zeng2019,Umucalllar2017,Iskin2017}.
\begin{figure}[!htp]
    \centering
    \includegraphics[scale=0.6]{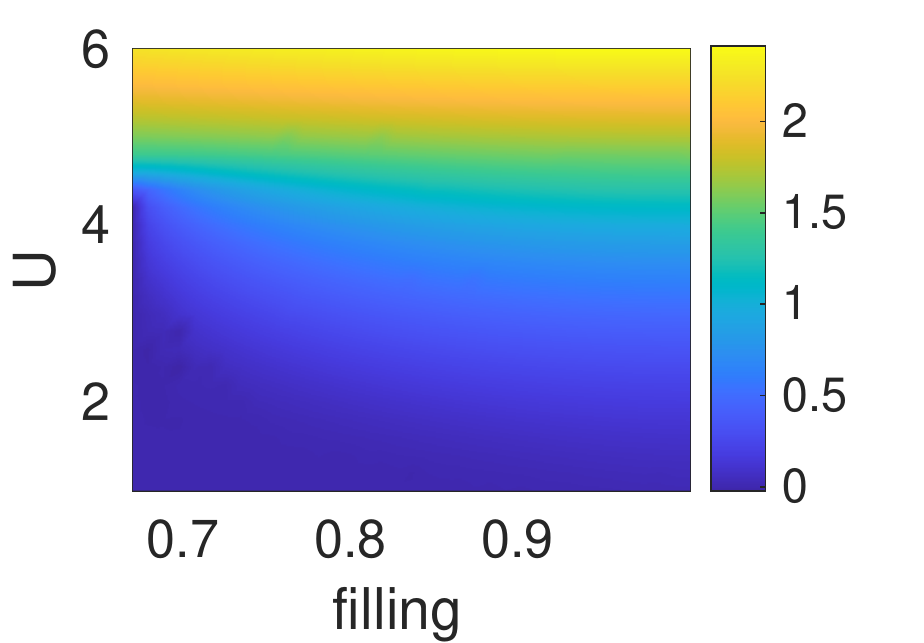}
    \caption{Phase diagram of 2D Hubbard-Hofstadter model. $\alpha=1/3$. $60$ poles. The color bar encodes the order paramter $\overline{\Delta}$.}

    \label{fig:phase_diagram}
\end{figure}

\subsection{Small $\alpha$ factor} The magnitude of the magnetic field $B$ is directly related to the $\alpha$ factor according to the relation
\begin{equation}
  \label{eqn:B_alpha}
  B = \frac{\alpha \Phi_0}{a^2}.
\end{equation}
Here $\Phi_0= 2.0678\times 10^{-15} $ Wb is called superconducting magnetic flux quantum, and $a$ is the lattice spacing. For example, if $\alpha$ is $1/3$ and  $a$ is $4\times 10^{-10}$m, the magnetic field has to be about $4308$ T, which is much larger than the strongest magnetic field generated in any laboratory setting so far (around $1200$ T) \cite{Nakamura2018}.
Therefore in order to simulate the Hubbard-Hofstadter model (and related quantum systems) in the presence of an experimentally achievable magnetic field, we need to be able to simulate with a small $\alpha$ factor.
Since $t_y^\sigma (\cdot)$ should be periodic function, the Hubbard-Hofstadter model with a small $\alpha$ factor need to be simulated with a large lattice.

To demonstrate the capability of the PEXSI accelerated HFB solver, we set $\alpha = 1/480$. Then with the same lattice spacing, the corresponding magnetic field is just about $27$ T. The corresponding lattice has $1440\times 1440\sim 2\times 10^6$ sites. We simulate the $\mu VT$ ensemble with $\mu=-1$ and $U=2.0$. Figure. \ref{fig:small_alpha} shows the spatial distribution of the pairing potential. Six complete pair-density wave (PDW) periods are observed. In each period, there are multiple pair-density waves with shorter wavelength \cite{Iskin2017}. 
\begin{figure}[!htp]
  \centering
  \includegraphics[scale=0.45]{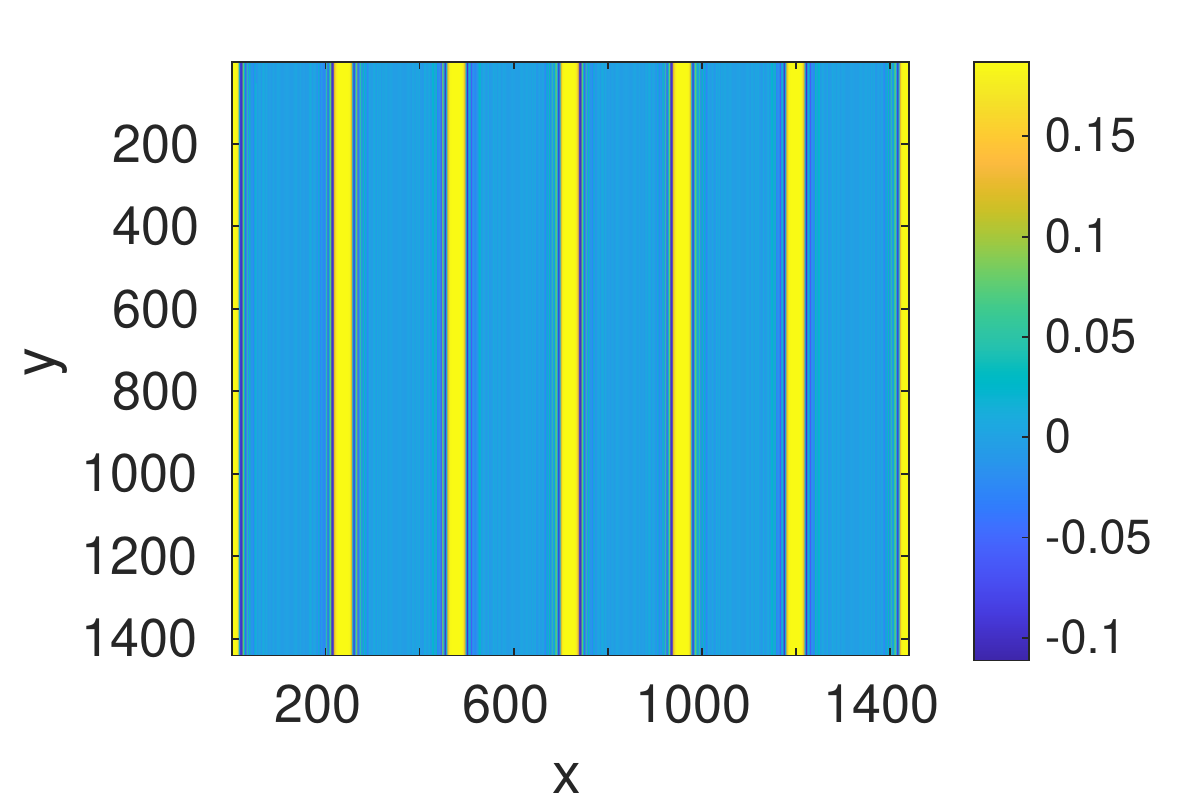}
  \includegraphics[scale=0.45]{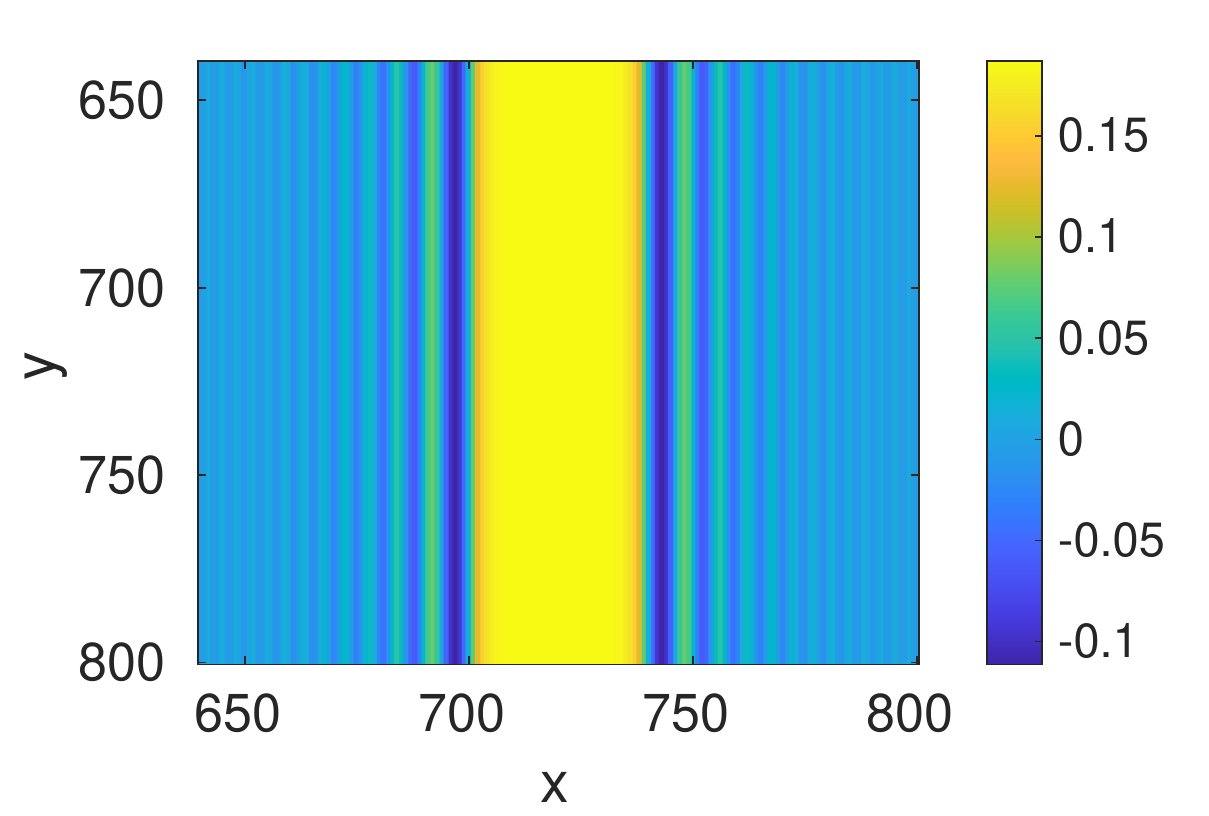}
  \caption{Distribution of pairing potential (left) and its zoom-in (right).}
  \label{fig:small_alpha}
\end{figure}

\section{Conclusion}\label{sec:conclusion}
We present the Hartree-Fock-Bogoliubov (HFB) theory from a numerical perspective, and focuses on the numerical solution of HFB equations for general finite sized quantum systems. When certain spin symmetries (singlet or triplet) are present, the  quasi-particle Hamiltonian, which is a matrix of size $2N_b\times 2N_b$ can be reduced to two matrices of size $N_b\times N_b$. We propose that the pole expansion and selected inversion (PEXSI) method can be well suited for solving large scale HFB equations efficiently, when the quasi-particle Hamiltonian is sparse. The accuracy and efficiency are investigated by the Hubbard-Hofstadter model. We solved a large-scale HFB equations with a relatively weak magnetic field, which attainable in the laboratory setting. Due to the wide use of HFB equations, we expect our solver could be useful for studying a number of physical phenoemena such as the Majorana corner in a large scale Hubbard-Hofstadter model \cite{Zeng2019}. It may also be useful as a mean-field subroutine in quantum embedding methods for treating strongly correlated systems, such as dynamical mean-field theory (DMFT)\cite{GeorgesKotliarKrauthEtAl1996} and density matrix embedding theory (DMET)\cite{KniziaChan2012}.  Such large scale calculations are necessary for understanding e.g. the superconductivity behavior of twisted bilayer graphene (TBG) with magic angles \cite{cao2018unconventional}.

\section*{Acknowledgment}

This work was partially supported by the Air Force Office
of Scientific Research under award number FA9550-18-1-0095,  
by the Department of Energy under Grant No. DE-SC0017867.   
We thank Garnet Chan, Zhihao Cui, Weile Jia, Michael Lindsey, Yu Tong and 
Jianxin Zhu for helpful discussions. We also thank the National Energy Research Scientific Computing Center (NERSC), the Berkeley Research Computing (BRC), and Google Cloud Platform (GCP) for computing resources.

\bibliographystyle{plain}
\bibliography{pexsibcs}

\begin{thebibliography}{10}

\bibitem{aidelsburger2013realization}
{\sc Monika Aidelsburger, Marcos Atala, Michael Lohse, Julio~T Barreiro,
  B~Paredes, and Immanuel Bloch}, {\em Realization of the hofstadter
  hamiltonian with ultracold atoms in optical lattices}, Physical review
  letters, 111 (2013), p.~185301.

\bibitem{Bach1994}
{\sc Volker Bach, Elliott~H Lieb, and Jan~Philip Solovej}, {\em {Generalized
  Hartree-Fock theory and the Hubbard model}}, Journal of Statistical Physics,
  76 (1994), pp.~3--89.

\bibitem{bcs0}
{\sc J.~Bardeen, L.~N. Cooper, and J.~R. Schrieffer}, {\em Microscopic theory
  of superconductivity}, Phys. Rev., 106 (1957), pp.~162--164.

\bibitem{bcs1}
\leavevmode\vrule height 2pt depth -1.6pt width 23pt, {\em Theory of
  superconductivity}, Phys. Rev., 108 (1957), pp.~1175--1204.

\bibitem{ScaLAPACK}
{\sc L.~S. Blackford, J.~Choi, A.~Cleary, E.~D'Azevedo, J.~Demmel, I.~Dhillon,
  J.~Dongarra, S.~Hammarling, G.~Henry, A.~Petitet, K.~Stanley, D.~Walker, and
  R.~C. Whaley}, {\em {ScaLAPACK} Users' Guide}, SIAM, Philadelphia, PA, 1997.

\bibitem{braunlich2014}
{\sc Gerhard Br{\"{a}}unlich, Christian Hainzl, and Robert Seiringer}, {\em
  {Translation-invariant quasi-free states for fermionic systems and the BCS
  approximation}}, Reviews in Mathematical Physics, 26 (2014), p.~1450012.

\bibitem{cao2018unconventional}
{\sc Yuan Cao, Valla Fatemi, Shiang Fang, Kenji Watanabe, Takashi Taniguchi,
  Efthimios Kaxiras, and Pablo Jarillo-Herrero}, {\em Unconventional
  superconductivity in magic-angle graphene superlattices}, Nature, 556 (2018),
  p.~43.

\bibitem{chiesa2013}
{\sc Simone Chiesa and Shiwei Zhang}, {\em Phases of attractive spin-imbalanced
  fermions in square lattices}, Phys. Rev. A, 88 (2013), p.~043624.

\bibitem{cocks2012time}
{\sc Daniel Cocks, Peter~P Orth, Stephan Rachel, Michael Buchhold, Karyn
  Le~Hur, and Walter Hofstetter}, {\em Time-reversal-invariant
  hofstadter-hubbard model with ultracold fermions}, Physical review letters,
  109 (2012), p.~205303.

\bibitem{covaci2010}
{\sc L.~Covaci, F.~M. Peeters, and M.~Berciu}, {\em Efficient numerical
  approach to inhomogeneous superconductivity: The chebyshev-bogoliubov--de
  gennes method}, Phys. Rev. Lett., 105 (2010), p.~167006.

\bibitem{cronenwett1998tunable}
{\sc Sara~M Cronenwett, Tjerk~H Oosterkamp, and Leo~P Kouwenhoven}, {\em A
  tunable kondo effect in quantum dots}, Science, 281 (1998), pp.~540--544.

\bibitem{ErismanTinney1975}
{\sc A.~Erisman and W.~Tinney}, {\em On computing certain elements of the
  inverse of a sparse matrix}, Comm. ACM, 18 (1975), p.~177.

\bibitem{GeorgesKotliarKrauthEtAl1996}
{\sc A.~Georges, G.~Kotliar, W.~Krauth, and M.~J. Rozenberg}, {\em Dynamical
  mean-field theory of strongly correlated fermion systems and the limit of
  infinite dimensions}, Rev. Mod. Phys., 68 (1996), p.~13.

\bibitem{Goedecker1999}
{\sc S.~Goedecker}, {\em {Linear scaling electronic structure methods}}, Rev.
  Mod. Phys., 71 (1999), pp.~1085--1123.

\bibitem{goldhaber1998kondo}
{\sc D~Goldhaber-Gordon, J{\"o}rn G{\"o}res, MA~Kastner, Hadas Shtrikman,
  D~Mahalu, and U~Meirav}, {\em From the kondo regime to the mixed-valence
  regime in a single-electron transistor}, Physical Review Letters, 81 (1998),
  p.~5225.

\bibitem{Hainzl2005}
{\sc Christian Hainzl, Mathieu Lewin, and Eric S{\'{e}}r{\'{e}}}, {\em
  {Self-consistent solution for the polarized vacuum in a no-photon {\{}QED{\}}
  model}}, Journal of Physics A: Mathematical and General, 38 (2005),
  pp.~4483--4499.

\bibitem{Hainzl2007}
{\sc Christian Hainzl, Mathieu Lewin, and Jan~Philip Solovej}, {\em {The
  mean-field approximation in quantum electrodynamics: The no-photon case}},
  Communications on Pure and Applied Mathematics, 60 (2007), pp.~546--596.

\bibitem{Hofstadter1976}
{\sc Douglas~R. Hofstadter}, {\em Energy levels and wave functions of bloch
  electrons in rational and irrational magnetic fields}, Phys. Rev. B, 14
  (1976), pp.~2239--2249.

\bibitem{Iskin2015}
{\sc M.~Iskin}, {\em {Attractive Hofstadter-Hubbard model with imbalanced
  chemical and vector potentials}}, Physical Review A - Atomic, Molecular, and
  Optical Physics, 91 (2015), pp.~1--12.

\bibitem{Iskin2017}
\leavevmode\vrule height 2pt depth -1.6pt width 23pt, {\em {Hofstadter-Hubbard
  model with opposite magnetic fields: Bardeen-Cooper-Schrieffer pairing and
  superfluidity in the nearly flat butterfly bands}}, Physical Review A, 96
  (2017).

\bibitem{JacquelinLinYang2016}
{\sc M.~Jacquelin, L.~Lin, and C.~Yang}, {\em {PSelInv}--a distributed memory
  parallel algorithm for selected inversion: the symmetric case}, ACM Trans.
  Math. Software, 43 (2016), p.~21.

\bibitem{JacquelinLinYang2018}
\leavevmode\vrule height 2pt depth -1.6pt width 23pt, {\em {PSelInv}--a
  distributed memory parallel algorithm for selected inversion: the
  non-symmetric case}, Parallel Comput., 74 (2018), p.~84.

\bibitem{Blaizot1986}
{\sc Georges~Ripka {Jean-Paul Blaizot}}, {\em {Quantum Theory of Finite
  Systems}}, MIT press, 1986.

\bibitem{KarypisKumar1998}
{\sc G.~Karypis and V.~Kumar}, {\em A fast and high quality multilevel scheme
  for partitioning irregular graphs}, SIAM J. Sci. Comput., 20 (1998),
  pp.~359--392.

\bibitem{KarypisKumar1998a}
\leavevmode\vrule height 2pt depth -1.6pt width 23pt, {\em A parallel algorithm
  for multilevel graph partitioning and sparse matrix ordering}, J. Parallel
  Distrib. Comput., 48 (1998), pp.~71--85.

\bibitem{kennedy2013spin}
{\sc Colin~J Kennedy, Georgios~A Siviloglou, Hirokazu Miyake, William~Cody
  Burton, and Wolfgang Ketterle}, {\em Spin-orbit coupling and quantum spin
  hall effect for neutral atoms without spin flips}, Physical review letters,
  111 (2013), p.~225301.

\bibitem{KniziaChan2012}
{\sc G.~Knizia and G.~Chan}, {\em Density matrix embedding: A simple
  alternative to dynamical mean-field theory}, Phys. Rev. Lett., 109 (2012),
  p.~186404.

\bibitem{Kosov2009}
{\sc D.~S. Kosov}, {\em Nonequilibrium fock space for the electron transport
  problem}, The Journal of Chemical Physics, 131 (2009), p.~171102.

\bibitem{kuzemsky2015}
{\sc A~L Kuzemsky}, {\em {Variational principle of Bogoliubov and generalized
  mean fields in many-particle interacting systems}}, International Journal of
  Modern Physics B, 29 (2015), p.~1530010.

\bibitem{KuzminLuisierSchenk2013}
{\sc A.~Kuzmin, M.~Luisier, and O.~Schenk}, {\em Fast methods for computing
  selected elements of the {Green's} function in massively parallel
  nanoelectronic device simulations}, in Euro-Par 2013 Parallel Proc.,
  Springer, 2013, pp.~533--544.

\bibitem{LandauLifshitz1991}
{\sc L.D. Landau and E.M. Lifshitz}, {\em {Quantum mechanics: non-relativistic
  theory}}, Butterworth-Heinemann, 1991.

\bibitem{lenzmann2010}
{\sc Enno Lenzmann and Mathieu Lewin}, {\em Minimizers for the
  hartree-fock-bogoliubov theory of neutron stars and white dwarfs}, Duke Math.
  J., 152 (2010), pp.~257--315.

\bibitem{LewinPaul}
{\sc Mathieu Lewin and S\'{e}verine Paul}, {\em A numerical perspective on
  {Hartree-Fock-Bogoliubov} theory}, ESAIM: Mathematical Modelling and
  Numerical Analysis, 48 (2014), p.~53–86.

\bibitem{LiDemmel2003}
{\sc X.S. Li and J.W. Demmel}, {\em {SuperLU\_DIST}: A scalable
  distributed-memory sparse direct solver for unsymmetric linear systems}, ACM
  Trans. Math. Software, 29 (2003), p.~110.

\bibitem{LinChenYangEtAl2013}
{\sc L.~Lin, M.~Chen, C.~Yang, and L.~He}, {\em Accelerating atomic
  orbital-based electronic structure calculation via pole expansion and
  selected inversion}, J. Phys.: Condens. Matter, 25 (2013), p.~295501.

\bibitem{LinGarciaHuhsEtAl2014}
{\sc L.~Lin, A.~Garc\'{\i}a, G.~Huhs, and C.~Yang}, {\em {SIESTA-PEXSI}:
  Massively parallel method for efficient and accurate \textit{ab initio}
  materials simulation without matrix diagonalization}, J. Phys.: Condens.
  Matter, 26 (2014), p.~305503.

\bibitem{LinLuYingEtAl2009}
{\sc L.~Lin, J.~Lu, L.~Ying, R.~Car, and W.~E}, {\em Fast algorithm for
  extracting the diagonal of the inverse matrix with application to the
  electronic structure analysis of metallic systems}, Comm. Math. Sci., 7
  (2009), p.~755.

\bibitem{LinLuYingE2009}
{\sc L.~Lin, J.~Lu, L.~Ying, and W.~E}, {\em Pole-based approximation of the
  {F}ermi-{D}irac function}, Chin. Ann. Math., 30B (2009), p.~729.

\bibitem{LinYangMezaEtAl2011}
{\sc L.~Lin, C.~Yang, J.~Meza, J.~Lu, L.~Ying, and W.~E}, {\em {SelInv -- An
  algorithm for selected inversion of a sparse symmetric matrix}}, ACM. Trans.
  Math. Software, 37 (2011), p.~40.

\bibitem{Marek_2014}
{\sc A~Marek, V~Blum, R~Johanni, V~Havu, B~Lang, T~Auckenthaler, A~Heinecke,
  H-J Bungartz, and H~Lederer}, {\em {The {\{}ELPA{\}} library: scalable
  parallel eigenvalue solutions for electronic structure theory and
  computational science}}, Journal of Physics: Condensed Matter, 26 (2014),
  p.~213201.

\bibitem{Moussa2016}
{\sc J.~E. Moussa}, {\em Minimax rational approximation of the fermi-dirac
  distribution}, J. Chem. Phys., 145 (2016), p.~164108.

\bibitem{Nagai2012}
{\sc Yuki Nagai, Yukihiro Ota, and Masahiko Machida}, {\em {Efficient numerical
  self-consistent mean-field approach for fermionic many-body systems by
  polynomial expansion on spectral density}}, Journal of the Physical Society
  of Japan, 81 (2012).

\bibitem{Nakamura2018}
{\sc D~Nakamura, A~Ikeda, H~Sawabe, Y~H Matsuda, and S~Takeyama}, {\em {Record
  indoor magnetic field of 1200 T generated by electromagnetic
  flux-compression}}, Review of Scientific Instruments, 89 (2018), p.~95106.

\bibitem{NegeleOrland1988}
{\sc J.~W. Negele and H.~Orland}, {\em Quantum many-particle systems},
  Westview, 1988.

\bibitem{Pulay1980}
{\sc P.~Pulay}, {\em {Convergence acceleration of iterative sequences: The case
  of SCF iteration}}, Chem. Phys. Lett., 73 (1980), pp.~393--398.

\bibitem{Pulay1982}
{\sc P.~Pulay}, {\em Improved {SCF} convergence acceleration}, J. Comput.
  Chem., 3 (1982), pp.~54--69.

\bibitem{Rosenberg2015}
{\sc Peter Rosenberg, Simone Chiesa, and Shiwei Zhang}, {\em {FFLO order in
  ultra-cold atoms in three-dimensional optical lattices}}, Journal of Physics
  Condensed Matter, 27 (2015), p.~225601.

\bibitem{shiwei2017}
{\sc Hao Shi and Shiwei Zhang}, {\em Many-body computations by stochastic
  sampling in {Hartree-Fock-Bogoliubov} space}, Phys. Rev. B, 95 (2017),
  p.~045144.

\bibitem{SzaboOstlund1989}
{\sc A.~Szabo and N.S. Ostlund}, {\em {Modern Quantum Chemistry: Introduction
  to Advanced Electronic Structure Theory}}, McGraw-Hill, New York, 1989.

\bibitem{TakahashiFaganChin1973}
{\sc K.~Takahashi, J.~Fagan, and M.~Chin}, {\em Formation of a sparse bus
  impedance matrix and its application to short circuit study}, in 8th {PICA}
  Conf. Proc., 1973.

\bibitem{Umucalllar2017}
{\sc R.~O. Umucalllar and M.~Iskin}, {\em {BCS Theory of
  Time-Reversal-Symmetric Hofstadter-Hubbard Model}}, Physical Review Letters,
  119 (2017).

\bibitem{wang2014topological}
{\sc Lei Wang, Hsiang-Hsuan Hung, and Matthias Troyer}, {\em Topological phase
  transition in the hofstadter-hubbard model}, Physical Review B, 90 (2014),
  p.~205111.

\bibitem{Zeng2019}
{\sc Chuanchang Zeng, T.~D. Stanescu, Chuanwei Zhang, V.~W. Scarola, and
  Sumanta Tewari}, {\em Majorana corner modes with solitons in an attractive
  hubbard-hofstadter model of cold atom optical lattices}, Phys. Rev. Lett.,
  123 (2019), p.~060402.

\bibitem{zha2010}
{\sc Guo-Qiao Zha, Lucian Covaci, Shi-Ping Zhou, and F.~M. Peeters}, {\em
  Proximity-induced pseudogap in mesoscopic superconductor/normal-metal
  bilayers}, Phys. Rev. B, 82 (2010), p.~140502.

\bibitem{Zheng2016}
{\sc Bo~Xiao Zheng and Garnet Kin~Lic Chan}, {\em {Ground-state phase diagram
  of the square lattice Hubbard model from density matrix embedding theory}},
  Physical Review B, 93 (2016), pp.~1--17.

\bibitem{Zheng2017}
{\sc Bo~Xiao Zheng, Chia~Min Chung, Philippe Corboz, Georg Ehlers, Ming~Pu Qin,
  Reinhard~M. Noack, Hao Shi, Steven~R. White, Shiwei Zhang, and Garnet Kin~Lic
  Chan}, {\em {Stripe order in the underdoped region of the two-dimensional
  Hubbard model}}, Science, 358 (2017), pp.~1155--1160.

\bibitem{zhu2016bogoliubov}
{\sc Jian-Xin Zhu}, {\em Bogoliubov-de Gennes Method and Its Applications},
  vol.~924, Springer, 2016.

\end{thebibliography}

\end{document}